\newcommand{\tm}[1]{\textrm{#1}}
\renewcommand{\P}[1]{\mathbf{P}\left(#1\right)}
\newcommand{\E}[1]{\mathbf{E}\left[#1\right]}
\def\vm#1{{#1}}
\newtheorem{theorem}{Theorem}[section]
\newtheorem{lemma}[theorem]{Lemma}
\newtheorem{proposition}[theorem]{Proposition}
\newtheorem{corollary}[theorem]{Corollary}
\newtheorem{remark}[theorem]{Remark}
\newtheorem{claim}[theorem]{Claim}
\begin{document}

\title{A Generalized Bass Model for Product Growth in Networks}

\author{
Vahideh H. Manshadi%
    \thanks{ Yale School of Management, CT 06511.
    Email: \protect\url{vahideh.manshadi@yale.edu}.
    }
    \and
Sidhant Misra%
    \thanks{Los Alamos National Laboratory, Los Alamos, NM 87545.
    Email: \protect\url{sidhant@lanl.gov}.
    }
}

\maketitle

\begin{abstract}

Many products and innovations become well-known and widely adopted through the social interactions of individuals in a population. The Bass diffusion model has been widely used to model the temporal evolution of adoption in such social systems.  In the model, the likelihood of a new adoption is proportional to the number of previous adopters, implicitly assuming a global (or homogeneous) interaction among all individuals in the network.  Such global interactions do not exist in many large social networks, however. Instead, individuals typically interact with a small part of the larger population.  To quantify the growth rate (or equivalently the adoption timing) in networks with limited interactions, we study
a stochastic adoption process where the likelihood that each individual adopts is proportional to the number of adopters among the small group of persons he/she interacts with
(and not the entire population of adopters). When the underlying network of interactions is a random $k$-regular graph, we compute the sample path limit of the fraction of adopters. We show the limit coincides with the solution of a  differential equation which can viewed as a generalization of the Bass diffusion model.  When the degree $k$ is bounded, we show the adoption curve differs significantly from the one corresponds to the Bass diffusion model. In particular, the adoption grows more slowly than what the Bass model projects. In addition, the adoption curve is asymmetric, unlike that of the Bass diffusion model. Such asymmetry has important consequences for the estimation of market potential. Finally, we calculate the timing of early adoptions at finer scales, e.g., logarithmic in the population size.

\end{abstract}

\section{Introduction}
\label{sec:intro}

%



The diffusion of innovations and products via social interactions has long been observed in various social systems \cite{Roger}. In this process, a few pioneers adopt an innovation and influence those persons in contact with them. These people, in turn,   adopt  the innovation and  influence  their  contacts, and the  innovation spreads  through  the  network  as  a result  of these  social interactions. The impact of social interactions on the spread of innovation has intensified with the rapid growth and popularity of online social interactions.  Smartphone applications are a good example of a product whose popularity rests on the social interactions (and word-of-mouth communication) of the users.


Understanding the growth rate of the diffusion of a product/innovation in a population is imperative for both marketing and managerial decisions, such as inventory management and pricing.  Bass first studied the timing of product adoption using a simple differential equation later known as the {\em Bass diffusion model} \cite{Bass}.  In this  model, at  any  time,  the  growth  rate \vm{is proportional to}
the fraction  of the population who has  adopted it so far.   This implicitly relies on the assumption that the population is homogeneously
mixing, and thus, a non-adopter can be influenced  by {\em all} adopters. Such global interactions do not exist in many modern social networks, however.  Although the size of an online social network may be massive,  each person on that network is usually in contact with a small group of friends and he/she is influenced only by those persons.  This raises the following question:   How does product adoption evolve in networks with limited interactions? In this paper, we develop a machinery for characterizing product growth in a large population where the structure of the underlying network of interactions belongs to a large class of random graphs. \vm{When the underlying network of interactions is a $k$-regular random graph, we provide a generalization to the Bass model and contrast its adoption curve with the one corresponds to the Bass model. }

\vm{The Bass diffusion model is often presented in the following differential form:  $\frac{\partial s}{\partial t} \propto s\left(1 - s\right)$, where $s(t)$ is the
 fraction of adopters at time $t$.  \footnote{Note that this  is a special  case of the Bass model  where  the  coefficient of innovation is zero.}We show that when the underlying network is a random $k$-regular graph, the adoption process grows at the following rate} 

\begin{align}
\frac{\partial {s}}{\partial t} \propto \left[1- (1- {s})^{1-\frac{2}{k}}\right] (1-{s})
\label{eq:DE_Rand_intro}
\end{align}


\vm{The above differential equation implies, fixing the fraction of non-adopters ($1-s$), the growth rate depends sub-linearly on the fraction of adopters (as opposed to linearly in the Bass model).  This difference stems from the limited interactions in the network. However, as the degree $k$ grows, the factor $\left[1- (1- {s})^{1-\frac{2}{k}}\right]$ converges to  $s$.   Thus, the Bass diffusion model can be seen as a special case of differential equation \eqref{eq:DE_Rand_intro}.
}

\vm{Comparing the adoption curve resulted from these two differential equations, we show: (i) the adoption grows more slowly than what the Bass model projects, therefore, using the Bass model will over-estimate the adoption growth. (ii) unlike the adoption curve corresponding to the Bass model, the true adoption curve is asymmetric. In particular, the
adoption spreads faster in the second half (i.e., after reaching half the population) than the first half (see Figure \ref{fig:frac_vs_time} and Figure \ref{fig:CompvsRandom}). Such asymmetry can result in misestimating the future demand based on observations early in the adoption process.}

\vm{In order to establish the above result (and a few others), we study a natural stochastic adoption process which works as follows.}
Agents are nodes on a graph, and an edge between two nodes means these agents \vm{interact which each other}.  Initially, one random node adopts the product. Later, each node contacts a randomly selected neighbor at an independent Poisson process (with a given rate).   If the  contacting node is an adopter  and  the  contacted neighbor  is not,   the  latter adopts  the  product  with  a given probability.


Like \cite{Bass}, we are mainly concerned with the timing of the adoption  in the regime that a fraction of the population has already adopted. We denote this phase the {\em major adoption regime}, and we analyze the time it takes to increase the {\em fraction} of adopters by a constant (independent of the population size).   This is in the same spirit as fluid limits in queueing theory; \cite{Whitt_book,fluid_Book} \footnote{Note here we do not need to scale time, because the contact rate grows proportionally with the number of adopters.}. In particular, we show that the sample paths  of the scaled adoption  process (i.e., number  of adopters divided  by the  population size) almost  surely converges to a deterministic function.
First, as a sanity check, we find the limit for the complete network (\vm{which corresponds to the homogenously mixing population})
and confirm the limit of our adoption process coincides with the Bass model (see Theorem \ref{thm:complete} and Remark \ref{rem:comp}).  Next, and far more importantly, we establish the limit for random $k$-regular graphs (where $k$ is a constant independent of the population size) and \vm{show it equates with the solution of \eqref{eq:DE_Rand_intro}} (see Theorem \ref{thm:random} and Remark \ref{rem:random}).

We note that for similar stochastic diffusion processes, a differential equation approximation (particularly Kurtz’s theorem \cite{kurtz}) has been used to re-derive the Bass model when the underlying graph is assumed to be complete (e.g., \cite{Massoulie}). However, such an approach cannot be directly  applied to other  network  structures. In fact, analyzing  the scaled sample  path  of the  diffusion process  for general  graphs  is prohibitively difficult. Recent  work has used concepts  from mean field theory  to approximate the growth  rate  for a certain  class of random  graphs \cite{Jackson_Rogers,Ramesh_Shakkottai,Yaniv}.  The basic idea of these  models is to approximate the  fraction  of adopter  neighbors  of each agent by the  fraction of adopters in the  whole population.  \footnote{When  dealing  with  random graphs  with  general  degree  distribution, some  of these  methods approximate the fraction of adopter neighbors with a given degree $d$ by the fraction of adopters with degree $d$ in the whole population.}
\vm{However, our analysis shows that when the degree is bounded, an adopter is more likely to be connected to adopters. Therefore, the fraction of his adopter neighbors will be higher than the fraction of adopters in the population. }
\vm{Accordingly, the aforementioned method of approximation tends to over-estimate the  growth  rate  in random  regular  graphs,  even though  it does improve  upon the  Bass model.  }
In order to exactly characterize the growth rate,  we develop a new technique  to incorporate the effect of network structure in the evolution of the adoption  process for random  regular graphs.  In Section \ref{sec:discus}, we describe how to generalize our analysis to compute  the limits for random  graphs with more general  degree distributions (under  certain  conditions  for the  distribution).  We also explain  how to modify our analysis to accommodate a more general Bass model with innovators  and an SIR epidemic model.  \footnote{These  terminologies are used  in  infectious  diseases;  S stands for susceptible, I for infectious  and  R for removed, i.e.  immune  or dead.}


In addition  to finding the limits in the major adoption  regime, we find the limit of timing  in the  early adoption  regime which refers to the  phase when the number  of adopters is logarithmic  in the  population size.  In this  regime, we show that the  time  needed  to acquire a logarithmic  number  of adopters scales double logarithmically  in the population size. We also show that compared  to complete graphs, the  adoption  process  grows  more  slowly  in  random  $k$-regular  graphs  by  a  factor  of  $\frac{k}{k-2}$ (see Theorem  \ref{thm:sublin}).

\vm{Besides marketing decisions, several managerial  questions  arise in the  presence  of product  diffusion (sometimes  called social learning  or word-of-mouth  effects), for  example,  how to manage  demand  when the  supply is constrained \cite{Savin,kumar,Shen_Supply}, how to price optimally   \cite{Shen_pricing}, 
\footnote{For pricing in presence of externalities or word-of-mouth effect in social networks, see also \cite{Campbell,Khakbod,Ifrach,Ozan, maxim}.} and how to facilitate  diffusion for the adoption  of green technologies  \cite{Saed}\cite{Diaz}. The Bass model has been extensively  used to study  these  problems, but in situations where each agent  can only influence a limited  number  of others,  it tends  to misestimate the  adoption growth.   Accordingly, our proposed technique  for exactly characterizing the  product  growth  can prove useful in developing \vm{more accurate} managerial  insights  for product diffusion in networks with limited  interactions.}

\subsection{Related Work}
\label{subsec:related}


The  adoption  model  we study  in this  paper  has  counterparts in many  other  areas,  ranging from epidemiology to economics.  It  is a stochastic  version of the  SI (Susceptible  Infected)  model used  in epidemiology \cite{epidemic,Durrett}. It is also closely linked to the  gossip algorithms  used in data  aggregation  and  distributed computing  in sensor and  peer-to-peer  networks  \cite{Devavrat_book,Devavrat_gopssip}, and to the learning  models studied in social and  economic  networks  \cite{Jackson}.  Our  work complements the  earlier  studies  of epidemic processes by focusing on  timing  in the  major  adoption  regime (i.e., when a fraction of the  population has  already  adopted).  Several  papers  have shown the growth of epidemics  is related  to  the  spectral  radius  and  expansion  properties  of the  underlying  graphs \cite{Devavrat_gopssip,Jackson} (for more general epidemic models, see \cite{GaneshSIR,GaneshSIS,SanjaySIS}).  It is well known that because random graphs have large expansion factors \cite{expansion1,expansion2}, epidemics spread  fast on them.   However,  the  functional  form of the  growth  has not  previously been calculated, nor has  the time  needed  to grow the  fraction  of nodes in epidemics  from a constant  $\alpha$ to another  constant  $\gamma$ been analyzed.


As mentioned above, several papers use approximation methods to develop tractable frameworks to analyze the  adoption/epidemic process in networks  with  a given degree distribution.  \cite{Jackson_Rogers} use a mean field approach to study an SIS (Susceptible  Infected Susceptible)  model and relate stochastic  dominance  properties  of the  degree distribution to  the  infection  rate.   \cite{Ramesh_Shakkottai} employ a mean field approximation to model the  temporal  evolution  of demand  for stored  content on the  Internet.  Using this  demand  model,  they  examine  the  delay  performance  of several content distribution mechanisms. \vm{Even though such a mean field approximation method provides a tractable framework, our analysis show that it tends to over-estimate the demand growth by neglecting the phenomenon that an adopter is more likely to be connected to adopters and such correlation evolves over time. }


Several papers  in the area of marketing are concerned with adoption  processes and the flow of information  in networks (see \cite{Yaniv} and references therein). Closest to our work are \cite{Yaniv}  who study  an adoption  process similar to ours on a random  graph  with a given degree distribution.  They  use an approximation method  similar  to that of \cite{Ramesh_Shakkottai} to show the adoption  growth  rate  depends  on the  mean and  variance  of the  degree distribution. They  use this model to uncover the degree  distribution based on the observed adoption  data.  Further, they  show when  the  network  degree is significantly  skewed,  the  adoption  curve  is asymmetrical. This has important consequences  for the estimation of market  potential. Interestingly, our rigorous analysis also shows the adoption  curve is asymmetric, even on random  $k$-regular graphs.

\vm{From a technical perspective, our} work  brings  the  literature on    processes  on  random  graphs  together with that on stochastic   differential  equations   and  fluid  limits.    To  analyze  the  adoption   process  on  random graphs,  we couple the  (continuous) adoption  process with  the  (discrete) graph  generation  process based on a configuration  model \cite{Wormald}. Abstracting from time, our adoption  process spreads  on random  graphs  in the same way as the {\em exploration  process} defined in \cite{molloy1}.   The  latter process was introduced to  find the  size of the  largest  connected  component in a random  graph  with  given  degree  distributions.   We  use    ideas  similar to  \cite{wormaldDiff2} to approximate the evolution of the adoption  process.  When coupling with time, we build on these results  \cite{wormaldDiff2,molloy1} for random  graphs  to compute  the limit of timing of the adoption  process.

%
%

%
%
%
%

\section{Model and Main Results}
\label{sec:modelNmain}

We represent the social network  by graph  $G_n = (V,E)$, where $|V| = n$.  Each  node $v \in V$ represents an agent in the  system;  nodes $v$ and $u$ are neighbors  if $(v,u) \in E$.  At time  $0$, a randomly  selected node adopts  a new product  $Z$.  The  new product  spreads  through  the  local contacts between  the neighbors.  In particular, each node $v \in V$, contacts  a randomly  selected neighbor at an independent Poisson  process with  rate  $\beta$.  Suppose node $v$ adopts  $Z$ at  time  $t$; at  any contact after  $t$, if node $v$ contacts  a neighbor  $u$ that has not yet adopted  $Z$,  $u$ will adopt  the product  with probability $p$. Given the thinning  property of the Poisson process, WLOG,  we assume $p = 1$.


In this adoption  process, the number  of adopters can only increase over time.  If the underlying graph  is connected,  after  a finite time,  all agents  will adopt  the  new product. For any $1 \leq x  \leq n$, let $T_n(x)$ denote  the  minimum  time  needed  to have $x$ adopted  individuals.   Our  goal is to analyze limits  of $T_n(x)$ for different scales of $x$.  In particular, we define two main  regimes:  an {\em early adoption} regime in which $x  = O(\log n)$ and a {\em major adoption} regime where $x = \Theta(n)$.


We analyze  the  adoption  process on two classes of graphs:  complete  graphs  and  random  $k$-regular graphs where $k$ is a constant. The former class represents a network with global \vm{(or homogeneous)} interactions, and the latter serves as a model of limited  interactions among individuals.


In the next section,  we focus on the major adoption  regime and give almost  sure results  on how long it takes to grow the fraction  of adopters from $\alpha$ to $\gamma$, where $0 < \alpha \leq \gamma < 1$.

\subsection{Timing in Major Adoption Regime}
\label{subsec:major}


In the major  adoption  regime, we assume a constant fraction  of the population has already adopted  the product, and we are concerned with the time needed to add $\Theta(n)$ more adopters.  More precisely, for any $0  < \alpha \leq \gamma < 1$, let $\Delta_n(\alpha n,\gamma n )$ be $T_n(\gamma n) - T_n(\alpha n)$.  In this subsection,  we compute the limit of $\Delta_n(\alpha n,\gamma n )$. We start by analyzing  the timing  in major  adoption  when the underlying graph  is a complete  graph  and show that:

\begin{theorem}[Major  adoption  in a complete  graph]
Suppose for all $n >1$, the underlying  graph $G_n$ is the complete graph.  Then,  for any $0  < \alpha \leq \gamma < 1$:

\begin{align}
\Delta_n(\alpha n,\gamma n ) \stackrel{a.s.}{\rightarrow}  \theta(\gamma) - \theta(\alpha),
\label{eq:timeLimit}
\end{align}
where $\theta(s) = \frac{1}{\beta}\log \frac{s}{1-s}$,  for $0 < s < 1$.
\label{thm:complete}
\end{theorem}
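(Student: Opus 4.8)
The plan is to observe that on the complete graph the number of adopters is a one-dimensional pure-birth Markov chain, and then to analyze the sum of its independent holding times directly. First I would set up the chain: let $X(t)$ denote the number of adopters at time $t$, starting from $X(0)=1$. On the complete graph every node has $n-1$ neighbors, so when $X(t)=x$ the $x$ adopters each fire their Poisson clock at rate $\beta$ and a given contact lands on one of the $n-x$ non-adopters with probability $(n-x)/(n-1)$. Since only adopter-to-non-adopter contacts trigger new adoptions, the chain jumps from $x$ to $x+1$ at rate
\begin{align}
\lambda_x = \beta\, x\,\frac{n-x}{n-1}.
\end{align}

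Because $X(t)$ is strictly increasing, the time spent in state $x$ before jumping to $x+1$ is an independent $\mathrm{Exp}(\lambda_x)$ variable $\tau_x$, and hence, writing $a=\lceil\alpha n\rceil$, $b=\lceil\gamma n\rceil$,
\begin{align}
\Delta_n(\alpha n,\gamma n) = \sum_{x=a}^{b-1} \tau_x
\end{align}
is a sum of independent exponentials. Using $\E{\tau_x}=1/\lambda_x$ and setting $s=x/n$, the mean
\begin{align}
\E{\Delta_n(\alpha n,\gamma n)} = \sum_{x=a}^{b-1} \frac{n-1}{\beta\, x\,(n-x)}
\end{align}
is a Riemann sum for $\frac{1}{\beta}\int_\alpha^\gamma \frac{ds}{s(1-s)}$. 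The partial-fraction identity $\frac{1}{s(1-s)}=\frac1s+\frac{1}{1-s}$ makes this integral equal to $\theta(\gamma)-\theta(\alpha)$ with $\theta(s)=\frac{1}{\beta}\log\frac{s}{1-s}$; since $\alpha,\gamma$ are bounded away from $0$ and $1$ the summands are uniformly $\Theta(1/n)$ and the Riemann-sum error is $O(1/n)$, so $\E{\Delta_n(\alpha n,\gamma n)}\to\theta(\gamma)-\theta(\alpha)$.

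The remaining task is to upgrade this to almost-sure convergence via concentration. Each $\tau_x$ has $\Var{\tau_x}=1/\lambda_x^2=\Theta(1/n^2)$ and is sub-exponential with scale $\Theta(1/\lambda_x)=\Theta(1/n)$; summing over the $\Theta(n)$ indices gives total variance $O(1/n)$ and total sub-exponential scale $O(1/n)$. Bernstein's inequality for independent sub-exponential variables then yields, for each fixed $\varepsilon>0$,
\begin{align}
\P{\bigl|\Delta_n(\alpha n,\gamma n)-\E{\Delta_n(\alpha n,\gamma n)}\bigr| > \varepsilon} \le 2\,\e{c\,n}
\end{align}
for some $c=c(\varepsilon,\alpha,\gamma,\beta)>0$ and all large $n$. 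These tail probabilities are summable in $n$, so Borel--Cantelli gives $\Delta_n-\E{\Delta_n}\to 0$ almost surely, which together with the mean computation proves the claim.

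The main obstacle is exactly this last step. Chebyshev's inequality applied to the $O(1/n)$ variance only gives $\P{|\Delta_n-\E{\Delta_n}|>\varepsilon}=O(1/n)$, and since $\sum_n 1/n$ diverges this delivers convergence in probability but not almost-sure convergence. One must exploit the genuinely exponential structure of the holding times—rather than merely their finite variance—to obtain an exponentially small tail, which is what makes the Borel--Cantelli argument go through. The surrounding bookkeeping (integer rounding of $\alpha n,\gamma n$ and the harmless transient before reaching $\lceil\alpha n\rceil$ adopters) is routine.
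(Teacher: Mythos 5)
Your proposal is correct and follows essentially the same route as the paper: the same rate $\lambda_i = \beta\, i(n-i)/(n-1)$, the same decomposition of $\Delta_n$ into independent exponential holding times, the same Riemann-sum computation of the mean, and an exponential-tail concentration bound (the paper derives it by an explicit Chernoff/MGF calculation, which is what your invocation of Bernstein for sub-exponential variables amounts to) followed by Borel--Cantelli. Your remark that Chebyshev would only give an $O(1/n)$ tail, insufficient for almost-sure convergence, correctly identifies why the exponential structure must be exploited, exactly as in the paper's Claim~\ref{claim:concentration}.
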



First  note that function  $\theta(s)$ is centered  such that $\theta(1/2) = 0$.  Also, note that function  $\theta(s)$ is strictly  increasing,  and, thus, it is an injective function.  Its inverse is $s(t) = \frac{e^{\beta t}}{1 + e^{\beta t}}$, {\em logistic equation}  that is a special case of the Bass model.


\begin{remark}
\label{rem:comp}
Let $S_n(t)$ be the number of adopters  at time $t$.  Theorem  \ref{thm:complete} implies that  $\frac{S_n(t)}{n} \stackrel{a.s.}{\rightarrow}  s(t)$, where $s(t)$ is the solution  of the following differential  equation:

\begin{align}
\frac{\partial s}{\partial t} = \beta s (1-s)
\label{eq:DE_Comp}
\end{align}

\end{remark}




Thus,  the  limit of the  scaled sample  paths  of our probabilistic adoption  process coincides with the  deterministic logistic  function.
Further, note  that function  $\theta(s)$ has  the  following symmetry property:  for any  $0 < s < 1$, $\theta(s) = - \theta(1-s)$.   The  time  it  takes  to  grow the  fraction  of the  adopters from $\alpha$ to $1/2$ is the  same as the time  it takes  to grow the  fraction  from $1/2$ to $1-\alpha$ where $0 < \alpha < 1/2$.  This  symmetry  results from having a complete  graph  (a \vm{homogenously} mixing population) and is intuitively explained as follows: at any time,  the subgraph  including the adopters is a complete  graph,  and so is the subgraph  consisting of nodes who have not yet adopted. Now, we can look at the process in a backward  way; if node $u$ is a non-adopter, and it contacts  node $v$ who is an adopter, then  node $v$ will abandon  product $Z$; the abandonment of the  product  will spread  through the network  in this way.  Because the processes of adoption  and discard  spread  in exactly  the same way, the time needed to grow the set of non-adopters from $1 - s$ to $1/2$ will be the same as the time required  to grow the set of adopters from $s$ to $1/2$.


The  limit  result   \eqref{eq:DE_Comp} can be proven  directly  by using stochastic  differential  equations  and  Kurtz’s  theorem  (for  instance,  see \cite{Massoulie}, Section  1.3.1).   Here,  we present an alternative proof that analyzes  the  random  times  between  any two consecutive  adoptions  and directly  establishes  the time limit \eqref{eq:timeLimit}.

\begin{proof}

For any $1 \leq i \leq n-1$, let $\tau_i$ be the time it takes  to grow the number  of adoptions  from $i$ to $i + 1$.  First  note that $\Delta_n(\gamma n ,  \alpha n) = \sum_{i = \alpha n }^{\gamma n -1}\tau_i$. Further, note that conditioned  on the  set of adopted
nodes, $\tau_i$'s are independent exponential  random  variables.  Let $\lambda_i$ be the rate of $\tau_i$. For the complete
graph,  we compute  the rate  $\lambda_i$ as follows: there  are $i$ adopter  nodes who can contact non-adopters. When  any adopter  node $v$ makes a contact, it contacts  a neighbor who has not adopted  yet with probability $(n-i)/(n-1)$.  Thus,  using the thinning  property of the Poisson processes, we have:
\begin{align}
\lambda_i = \beta  \frac{i(n-i)}{n-1}.
\label{eq:rate:complete}
\end{align}

\noindent{First, to prove the theorem, we make the following claim:}

\begin{claim}
$\E{\Delta_n(\gamma n, \alpha n)} \rightarrow  \theta(\gamma) - \theta(\alpha)$.
\label{claim:limit}
\end{claim}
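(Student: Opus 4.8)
The plan is to compute the expectation exactly and then extract its asymptotics, relying only on the rate formula \eqref{eq:rate:complete} and elementary harmonic-sum estimates. Since conditioned on the set of adopters the $\tau_i$ are exponential with rate $\lambda_i$, we have $\E{\tau_i} = 1/\lambda_i$, and by linearity of expectation applied to $\Delta_n(\gamma n, \alpha n) = \sum_{i=\alpha n}^{\gamma n - 1}\tau_i$,
\begin{align}
\E{\Delta_n(\gamma n, \alpha n)} = \sum_{i = \alpha n}^{\gamma n - 1} \frac{1}{\lambda_i} = \frac{n-1}{\beta}\sum_{i = \alpha n}^{\gamma n - 1} \frac{1}{i(n-i)}.
\label{eq:plan-exp}
\end{align}
This reduces the claim to an entirely deterministic estimate of the sum on the right.

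The next step is to decouple the summand via partial fractions, writing $\frac{1}{i(n-i)} = \frac{1}{n}\bigl(\frac{1}{i} + \frac{1}{n-i}\bigr)$, so that \eqref{eq:plan-exp} becomes
\begin{align}
\E{\Delta_n(\gamma n, \alpha n)} = \frac{n-1}{\beta n}\left[\sum_{i = \alpha n}^{\gamma n - 1} \frac{1}{i} + \sum_{i = \alpha n}^{\gamma n - 1} \frac{1}{n-i}\right].
\label{eq:plan-split}
\end{align}
Each bracketed term is a difference of partial harmonic sums. Re-indexing the second sum by $j = n-i$ turns it into a harmonic tail running over $j$ from roughly $(1-\gamma)n$ to $(1-\alpha)n$. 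I would then invoke the standard asymptotic $\sum_{i=1}^{m}\frac{1}{i} = \log m + \gamma_{\mathrm{E}} + O(1/m)$ (Euler–Mascheroni). The key cancellations are that the constant $\gamma_{\mathrm{E}}$ drops out when taking differences of partial sums, and that the $\log n$ terms cancel between numerator and denominator of each ratio; what survives is $\log(\gamma/\alpha)$ from the first sum and $\log\frac{1-\alpha}{1-\gamma}$ from the second. Since $\frac{n-1}{n}\to 1$, we obtain in the limit
\begin{align}
\E{\Delta_n(\gamma n, \alpha n)} \to \frac{1}{\beta}\left[\log\frac{\gamma}{\alpha} + \log\frac{1-\alpha}{1-\gamma}\right] = \frac{1}{\beta}\log\frac{\gamma}{1-\gamma} - \frac{1}{\beta}\log\frac{\alpha}{1-\alpha} = \theta(\gamma) - \theta(\alpha),
\label{eq:plan-final}
\end{align}
which is exactly the claim.

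The only delicate point is controlling the error in passing from the finite harmonic sums to their logarithmic limits. This is where I would be most careful, but it should be genuinely routine here: because $\alpha$ and $\gamma$ are fixed constants in $(0,1)$, both summation ranges stay bounded away from $0$ and from $n$, so every summand is $\Theta(1/n)$ and the aggregate $O(1/n)$ remainder terms vanish uniformly as $n\to\infty$ (equivalently, \eqref{eq:plan-split} is a convergent Riemann sum for $\int_\alpha^\gamma \frac{dx}{x(1-x)}$). One minor bookkeeping nuisance is that $\alpha n$ and $\gamma n$ need not be integers; I would handle this by replacing them with $\lceil \alpha n\rceil$ and $\lfloor \gamma n\rfloor$, which shifts the endpoints by at most one term and hence does not affect the limit.
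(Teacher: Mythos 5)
Your proposal is correct and follows essentially the same route as the paper's own proof: both reduce $\E{\Delta_n(\gamma n,\alpha n)}$ to $\frac{n-1}{\beta n}\sum_{i=\alpha n}^{\gamma n-1}\left(\frac{1}{i}+\frac{1}{n-i}\right)$ via the rate formula and partial fractions, and then identify the limit as $\int_\alpha^\gamma \frac{dx}{x}+\int_\alpha^\gamma\frac{dx}{1-x}$ with an $O(1/n)$ error. Your use of the harmonic-number asymptotic is just a repackaging of the paper's Riemann-sum bounding, which you yourself note, so there is nothing substantive to add.
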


\noindent{The claim is proven in Appendix  \ref{appendixA}. Next, we establish  the concentration bounds shown in the second claim, given as:}

\begin{claim}
Suppose $\epsilon$ is a fixed small positive number,
\begin{align}
\P{|{\Delta_n(\gamma n, \alpha n)} - \E{\Delta_n(\gamma n, \alpha n)}| \geq \epsilon} \leq e^{-\delta n},
\label{eq:claim:hoeffding}
\end{align}
where $\delta$ is a small positive number given in Equation \eqref{delta}.
\label{claim:concentration}
\end{claim}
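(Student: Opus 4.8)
The plan is to exploit the representation $\Delta_n := \Delta_n(\gamma n, \alpha n) = \sum_{i=\alpha n}^{\gamma n - 1}\tau_i$ established above, where the $\tau_i$ are \emph{independent} exponential random variables: in a pure-birth count process the holding times are independent by the memorylessness of the Poisson process, and on the complete graph the rate $\lambda_i = \beta\, i(n-i)/(n-1)$ depends only on the count $i$, so each $\tau_i$ is unconditionally $\mathrm{Exp}(\lambda_i)$. Because the $\tau_i$ are unbounded, Hoeffding's inequality does not apply directly (despite the label); instead I would run a Chernoff/Bernstein argument using the exact moment generating function $\E{e^{t\tau_i}} = \lambda_i/(\lambda_i - t)$, which is finite for $t < \lambda_i$.

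The key quantitative input is that every rate appearing in the sum is of order $n$: for $i \in [\alpha n, \gamma n]$ one has $i(n-i) \ge \min\{\alpha(1-\alpha),\, \gamma(1-\gamma)\}\, n^2$, so $\lambda_i \ge c\,n$ for a constant $c = c(\alpha,\gamma,\beta) > 0$ (and $\lambda_i = O(n)$ as well). Consequently $\E{\Delta_n} = \sum_i \lambda_i^{-1}$ is $\Theta(1)$, and, crucially, the proxy variance $V := \sum_i \lambda_i^{-2}$ is $\Theta(1/n)$, being a sum of $\Theta(n)$ terms each of size $\Theta(1/n^2)$. This $1/n$ scaling of $V$ is exactly what produces an exponent linear in $n$.

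For the upper tail I would use, for $0 \le t \le \lambda_{\min}/2$, the bound $\log\E{e^{t\tau_i}} = -\log(1 - t/\lambda_i) \le t/\lambda_i + t^2/\lambda_i^2$ (valid since $t/\lambda_i \le 1/2$), so that by independence $\log\E{e^{t\Delta_n}} \le t\,\E{\Delta_n} + t^2 V$. The Chernoff step then gives $\P{\Delta_n - \E{\Delta_n} \ge \epsilon} \le \exp(-t\epsilon + t^2 V)$, and choosing $t = \epsilon/(2V) = \Theta(\epsilon n)$ yields $\exp(-\epsilon^2/(4V)) \le \exp(-\delta n)$ with $\delta = \Theta(\epsilon^2)$. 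The lower tail is symmetric and cleaner: since $\E{e^{-t\tau_i}} = \lambda_i/(\lambda_i + t)$ is finite for every $t > 0$ and $\log(1+x) \ge x - x^2/2$ for all $x \ge 0$, one gets $\log\E{e^{-t\Delta_n}} \le -t\,\E{\Delta_n} + t^2 V/2$, hence $\P{\Delta_n - \E{\Delta_n} \le -\epsilon} \le \exp(-t\epsilon + t^2 V/2)$ and, optimizing at $t = \epsilon/V$, the bound $\exp(-\epsilon^2/(2V))$. A union bound over the two tails finishes the proof, with $\delta$ the explicit constant recorded in \eqref{delta}.

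The one point requiring care, and the main obstacle, is the finite domain of the exponential MGF in the upper tail: the optimizer $t = \epsilon/(2V)$ must satisfy $t \le \lambda_{\min}/2 = \Theta(n)$. Since $V = \Theta(1/n)$ forces $t = \Theta(\epsilon n)$, this reduces to $\epsilon$ being below a fixed constant depending on $\alpha,\gamma,\beta$, which is precisely the hypothesis that $\epsilon$ is a fixed small positive number. (To sidestep the domain issue entirely one may instead take $t = \min\{\epsilon/(2V),\, \lambda_{\min}/2\}$ and verify that either branch still yields an $e^{-\delta n}$ bound.)
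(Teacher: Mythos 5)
Your proposal is correct and follows essentially the same route as the paper: a Chernoff bound on the moment generating function of the sum of independent exponentials $\tau_i$, exploiting that every rate $\lambda_i = \beta i(n-i)/(n-1)$ is $\Theta(n)$ on $i \in [\alpha n, \gamma n]$ so that the optimal tilt parameter is $\Theta(n)$ and the exponent is linear in $n$. The only (cosmetic) difference is bookkeeping: you organize the bound around the variance proxy $V=\sum_i \lambda_i^{-2}=\Theta(1/n)$ and must check the MGF domain constraint separately, whereas the paper picks $s^* = \frac{\epsilon/(2\E{\Delta_n})}{1+\epsilon/(2\E{\Delta_n})}\lambda^*$, which is automatically below $\lambda^* = \min_i \lambda_i$ and yields the same $e^{-\Theta(\epsilon^2 n)}$ bound.
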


\noindent{The second claim is proven in Appendix  \ref{appendixA} as well. This implies that}
\begin{align*}
\sum_{n = 1}^{\infty} \P{|{\Delta_n(\gamma n, \alpha n)} - \E{\Delta_n(\gamma n, \alpha n)}| \geq \epsilon} < \infty .
\end{align*}

\noindent{Now, applying  the  Borel-Cantelli lemma,  we have: $\Delta_n(\alpha n,\gamma n ) \stackrel{a.s.}{\rightarrow} \E{\Delta_n(\alpha n,\gamma n )}$ which  completes  the
proof.}

\end{proof}


Next,  we analyze  the  adoption  process on random  $k$-regular graphs,  where $k$ is a constant.  To ensure all nodes eventually  adopt,  we limit the sample space of the graphs to only include connected  ones.

\begin{theorem}[Major adoption in random $k$-regular graphs]
Suppose for all $n>1$, the underlying graph  $G_n$ is sampled  uniformly  at  random  from  the  set  of all connected  $k$-regular  graphs  with n nodes, where $k \geq 3$ is bounded.  For  any $0  < \alpha \leq \gamma < 1$, the following limit holds:
\begin{align}
\Delta_n(\gamma n,\alpha n) \stackrel{a.s.}{\rightarrow}  \tilde{\theta}(\gamma) - \tilde{\theta}(\alpha),
\label{eq:timeLimitR}
\end{align}
where $\tilde{\theta}(s) = \frac{k}{\beta(k-2)}\left[  \log{\left(1 -  (1-s)^{\frac{2}{k}-1} \right)}  - \log \left( 1 - 2^{1 -\frac{2}{k}} \right) \right]$.
\label{thm:random}
\end{theorem}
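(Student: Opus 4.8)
The plan is to reduce the continuous-time adoption process to a discrete exploration of a configuration-model graph, track a single state variable — the number of unmatched (\emph{free}) half-edges incident to adopters — and then convert the exploration into a timing statement, mirroring the structure of the complete-graph proof (reduce to a sum of waiting times, compute its expectation, and prove concentration). Concretely, I would couple the adoption dynamics with a lazy realization of the configuration model: equip each of the $nk$ half-edges with an independent rate-$\frac{\beta}{k}$ Poisson clock (so each node contacts a uniform neighbor at rate $\beta$), and reveal the random matching on demand, exposing the partner of a half-edge only when an \emph{adopter}'s half-edge fires. The structural fact that makes this tractable is that every matched edge is internal to the adopter set: a boundary edge can never have been matched, since matching an adopter half-edge into a non-adopter would force that non-adopter to adopt. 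Hence a non-adopter always retains all $k$ of its half-edges, so when the process has $i$ adopters the unmatched pool consists of $k(n-i)$ non-adopter half-edges together with $Y_i$ free adopter half-edges. Letting $\tau_i$ be the time to pass from $i$ to $i+1$ adopters, we again have $\Delta_n(\alpha n,\gamma n)=\sum_{i=\alpha n}^{\gamma n-1}\tau_i$, exactly as in Theorem~\ref{thm:complete}.

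The combinatorial heart is to determine the trajectory of $Y_i$. When a free adopter half-edge fires it matches a uniform element of the pool, reaching a non-adopter (a \emph{productive} event, triggering the next adoption) with probability $q_i=\frac{k(n-i)}{Y_i+k(n-i)-1}$ and another free adopter half-edge (a \emph{wasted} event, revealing an internal edge) otherwise. A productive event changes $Y$ by $k-2$ (the new adopter contributes $k-1$ free half-edges while the firing half-edge is consumed), and each wasted event changes $Y$ by $-2$; since the number of wasted events before a productive one is geometric with mean $\frac{1-q_i}{q_i}=\frac{Y_i}{k(n-i)}$, the one-step drift is $\E{\Delta Y_i}=(k-2)-\frac{2Y_i}{k(n-i)}$. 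Writing $y=Y_i/n$ and $s=i/n$, Wormald's differential-equation method \cite{wormaldDiff2} then yields $Y_i/n\to y(s)$ uniformly, where $y$ solves the linear ODE $\frac{dy}{ds}=(k-2)-\frac{2y}{k(1-s)}$; integrating with the natural initial condition $y(0)=0$ (only $o(n)$ half-edges are free while $o(n)$ nodes have adopted) gives $y(s)=k(1-s)\bigl[(1-s)^{2/k-1}-1\bigr]$.

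Finally I would translate this into timing. Free adopter half-edges fire at total rate $\frac{\beta}{k}Y_i$, so each firing takes expected time $\frac{k}{\beta Y_i}$ and one adoption needs $1/q_i$ firings in expectation, giving $\E{\tau_i}\approx\frac{k}{\beta Y_i q_i}=\frac{Y_i+k(n-i)}{\beta Y_i(n-i)}$ (the $O(1)$ change of $Y_i$ within a single step being negligible at scale $n$). Substituting $Y_i=ny(s)$ and using the identity $y(s)+k(1-s)=k(1-s)^{2/k}$, the per-step expectation becomes $\frac{1}{\beta n}\cdot\frac{(1-s)^{2/k-2}}{(1-s)^{2/k-1}-1}$, and a direct differentiation of $\tilde\theta$ shows this equals $\frac{1}{n}\tilde\theta'(s)$. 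Hence the Riemann sum $\sum_{i=\alpha n}^{\gamma n-1}\E{\tau_i}$ converges to $\int_\alpha^\gamma\tilde\theta'(s)\,ds=\tilde\theta(\gamma)-\tilde\theta(\alpha)$. To upgrade to almost-sure convergence I would, as in the complete-graph case, prove a concentration bound for $\Delta_n$ and invoke Borel--Cantelli; here concentration has two independent sources — the exploration (controlled by the exponentially small deviation estimates for $Y_i$ in the differential-equation method) and the Poisson clocks (controlled by a Bernstein/Azuma bound on $\sum\tau_i$ given the revealed half-edge trajectory).

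The hard part will be the rigorous, \emph{uniform} concentration of $Y_i$ around $y(s)$ over the whole window $i\in[\alpha n,\gamma n]$: this requires verifying the boundedness and Lipschitz hypotheses of the differential-equation method in the configuration model, controlling the rare multi-edges and the effect of conditioning on connectivity, and — most delicately — pinning down the initial condition so that the ODE is entered on the correct trajectory, which links the major-adoption analysis to the early-adoption regime. A secondary difficulty is cleanly decoupling the graph randomness from the clock randomness, so that the exponential-clock fluctuations in $\sum\tau_i$ can be bounded conditionally on the already-concentrated half-edge process.
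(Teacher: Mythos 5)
Your proposal is correct and follows essentially the same route as the paper: couple the adoption dynamics with an on-demand revelation of the configuration model, track the unmatched adopter half-edges (the paper's active clones $A(j)$, which coincide with your $Y_i$ under the change of variable $s = 1 - f(x)$, since $g(x)=k[(1-s)^{2/k}-(1-s)]=y(s)$), apply Wormald's differential-equation method, convert iterations to time via rate-$\beta/k$ clocks on half-edges, and finish with a concentration bound plus Borel--Cantelli after conditioning on simplicity and connectivity. The only cosmetic difference is that the paper indexes by the number of edges formed and tracks the sleeping nodes $N(j)$ (recovering $A(j)=k[n-N(j)]-2j$), whereas you index by the number of adopters and fold the geometric number of wasted firings into the one-step drift --- an equivalent reparametrization leading to the same limit.
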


\begin{remark}
\label{rem:random}
Let $\tilde{S}_n(t)$ be the number of adopters  at time $t$.  Theorem  \ref{thm:random} implies $\frac{\tilde{S}_n(t)}{n} \stackrel{a.s.}{\rightarrow}  \tilde{s}(t)$, where $\tilde{s}(t)$ is the solution  of the following differential  equation:
\begin{align}
\frac{\partial \tilde{s}}{\partial t} = \beta \left[1- (1- \tilde{s})^{1-\frac{2}{k}}\right] (1-\tilde{s})
\label{eq:DE_Rand}
\end{align}

\end{remark}


Figure  \ref{fig:frac_vs_time} compares the solution  of differential  equations  \eqref{eq:DE_Comp} and \eqref{eq:DE_Rand} for initial  value $s(0) = 0.01$, and $k = 5$.  As we can see, the  adoption  grows much  more slowly on a random  $5$-regular  graph  than on a  complete  graph.   The same can be observed  in the  left plot of Figure  \ref{fig:CompvsRandom} which basically  shows the  inverse  function.   (Note: the left  plot  of Figure  \ref{fig:CompvsRandom} shows the  limit  results  of the  time  it  takes  to  grow the  fraction  of adopters from $0.01$ to $s \in [0.01,0.99]$ on the  complete  graph  and the random  $5$-regular graph.)   The  high level intuition behind  this  observation is as follows: suppose  we reach  the  time that $i$ nodes have already adopted, where $i = \Theta(n)$; the rate of contact of adopters is $\beta i$ regardless of the underlying  graph.  However, the probability that an adopter  contacts  a non-adopter is higher on a complete  graph  for two reasons.  First, on a $k$-regular  graph,  the  subgraph  induced  by the adopters is connected; therefore,  each  adopter  has  $k - 1$ neighbors  who are  likely to be non-adopters.  Second,  those  $k - 1$ neighbors  are  not  uniform  samples  among  the  remaining $n - 2$ nodes.  In fact,  we show it is more likely that the  neighbor  of an adopter  belongs to the  set of adopters  itself.  This is a result  of the connectivity properties  of the subgraph  induced  by the adopters.

%

To further  highlight the  effect of connectivity among the  adopters, let us compute  the  rate  of $\tau_i$ on a random  graph  using a {\em mean  field approximation} (in the  same spirit  of approximation as \cite{Jackson_Rogers,Ramesh_Shakkottai,Yaniv}). We denote  this approximate rate  as $\tilde{\lambda}^{M}_i$.  As explained  above, the  rate  of contact by adopters is $\beta i$; each adopter has at  least  one adopter  neighbor  with  probability $1$.  For  adopter  node $v$, let $v'$  be the  neighbor who adopted  before $v$ and  was the  first adopter  who contacted $v$.  Clearly,  if $v$ contacts  node $v'$, this will not result  in a new adoption.  Now suppose node $v$ selects a random  neighbor  other  than $v'$; this happens  with probability $(k - 1)/k$.  In a mean field approximation, we assume the rest of the neighbors  of $v$ are uniformly  sampled  among the  other  $n-2$ nodes.  Thus,  the probability node $v$ will contact a non-adopter is $\frac{n-i}{n-2}$. This implies

\begin{align}
\tilde{\lambda}^{M}_i =\frac{\beta (k-1)}{k}  \frac{i(n-i)}{n-2}.
\label{eq:rate:meanF}
\end{align}


The approximation rate  of \eqref{eq:rate:meanF} has the same form as \eqref{eq:rate:complete} and is only scaled by $(k-1)/k$. The time limit resulting  from this mean field approximation is plotted in Figure  \ref{fig:CompvsRandom}. As we can see, it significantly  differs from the  actual  limit,  and,  in particular, it underestimates the  adoption time.

\begin{figure}[tbh]
\centering
{\includegraphics[width=17cm,height=7.5cm]{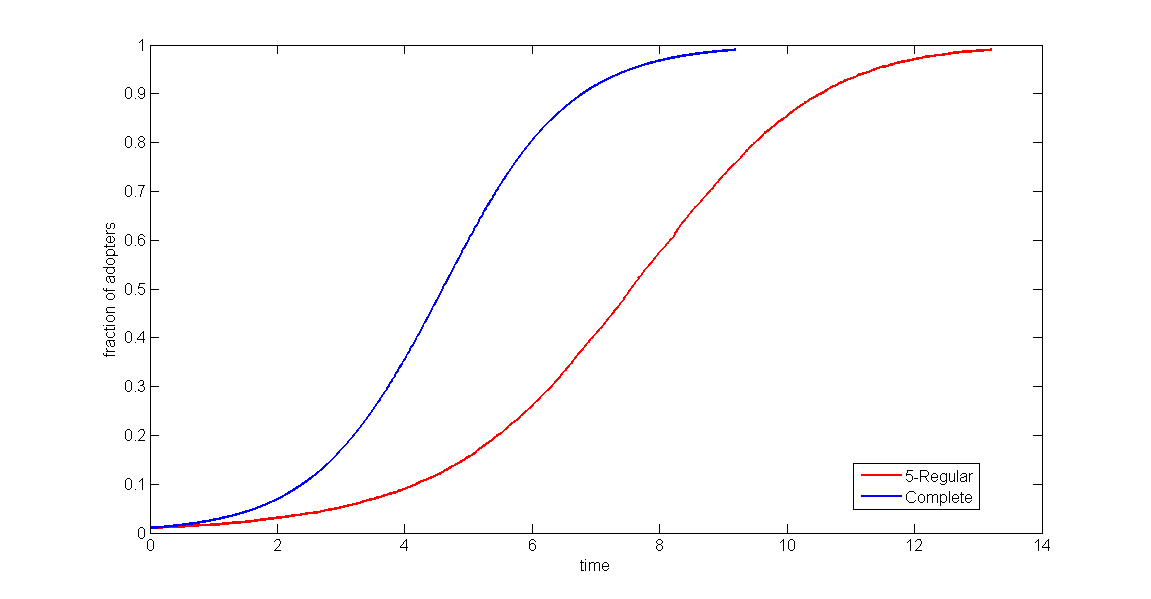}}
\caption{{\protect {Evolution of the fraction of adopters ($s(t)$); comparing a complete and a 5-regular random graph. Time is normalized such that at time 0 the fraction of adopters is 0.01.}}}
\label{fig:frac_vs_time}
\end{figure}


Further, on the right plot of Figure  \ref{fig:CompvsRandom}, we observe that unlike the complete graph, the normalized process for the random  regular  graph  (such that $\tilde{\theta}(1/2) = 0$) is not symmetric  around $1/2$, and the time  it  takes  to grow the  process from $\alpha < 1/2$ to $1/2$  is larger  than  the  time  it  take  to grow it
from $1/2$ to $1 - \alpha$.  This is again related  to the connectivity properties  of random  graphs.  If we look at the backward  process, we see the subgraph  of non-adopters is not necessarily connected;  thus,  the backward  process grows faster.


\begin{figure}[tbh]
\centering
{\includegraphics[width=17cm,height=7.5cm]{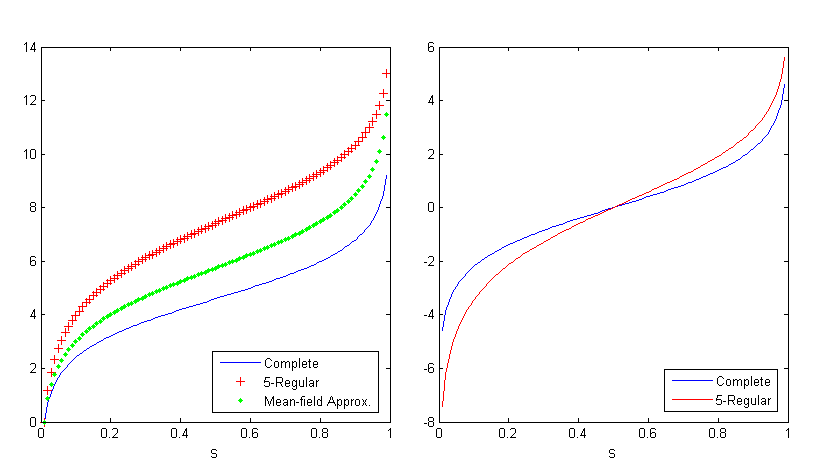}}
\caption{{\protect  {Left: The time it takes to grow the fraction of adopters from $0.01$ to $s \in [0.01,0.99]$ (functions $\theta(s) - \theta(0.01)$, $\tilde{\theta}(s) - \tilde{\theta}(0.01)$, and  $k/(k-1)\left(\theta(s) - \theta(0.01)\right)$ for $s \in [0.01,0.99]$). Right: The normalized timing functions  in major adoption regime (functions $\theta(s)$ and $\tilde{\theta}(s)$ for $s \in [0.01,0.99]$).}}}
\label{fig:CompvsRandom}
\end{figure}

{\bf Proof sketch of Theorem \ref{thm:random}:}

Similar  to  the  proof  for the  complete  graph,  we compute  the  rate  of the  exponential  time  between  any two  consecutive  adoptions.   Recall that we denote  the time it takes  to grow the number  of adoptions  from $i$ to $i + 1$ as $\tau_i$ and its rate by $\lambda_i$. Unlike the complete  graph,  we cannot  compute  the rate  $\lambda_i$ only based on the number  of adopters: suppose  node $v$ is an adopter,  and  it  samples  one of its  $k$ neighbors  to  contact.  Knowing  only $i$, we cannot  determine  how many  of $v$'s neighbors  have not  yet adopted.  To overcome this problem,  we first note that the  random  graph  can be generated  using an iterative pairing  process called the  {\it configuration  model} \cite{Wormald}. A configuration model works as follows: we start with $n$ isolated nodes.  Each node has $k$ clones (or half edges). At each step, a new edge is formed by pairing  two randomly  chosen clones; the process ends after  $nk/2$ steps,  when all the clones are paired.


Given  this  observation,  we couple  the  graph  generation and  the adoption  process in the following manner.:  We assume  the  graph  has  not  been  realized  before the  adoption  process. Thus,  at time $0$, we have $n$ isolated  nodes, each with $k$ unpaired  clones. Any time  an adopter  makes a contact, it chooses one of its clones uniformly at random.  If the clone has already  been paired,  this means both  ends of this edge have already  adopted. In this case, neither  the adoption  process nor the  set of formed edges will grow.  If the  clone has not  been paired,  we perform  a  {\it new iteration} of the  configuration  model and  form a new edge by sampling  a clone at random  among all unpaired  ones.  If the sampled  clone belongs to a non-adopter node, then  the adoption  process grows by one; otherwise,  the adoption  set remains  the same.


Abstracting away from the time, we first analyze the evolution of the number  of adopted  nodes and  their unpaired  clones in terms  of the  number  of iterations (which  are  discrete  time  random processes).  For the analysis,  we use the differential  equation  method  proposed by \cite{wormaldDiff,wormaldDiff2} to approximate discrete  random  processes using a deterministic function.   The  same approach has been used to find the  size of the  giant connected  components  in random  graphs by \cite{molloy1,molloy2}.


Next, we compute  the exponential  times between  any two consecutive  iterations, and based on the result, we compute  $\E{\Delta_n(\gamma n,\alpha n)}$.  At the  end,  similar  to  the  proof of Theorem \ref{thm:complete}, we use some concentration bound  and the Borel-Cantelli lemma to establish  almost  sure convergence.


The  proof also deals with  some  technical  subtleties; for example, it obtains  a bound  in a deterministic approximation stronger  than  the bound established in \cite{wormaldDiff,wormaldDiff2}. It also confirms the  concentration results  hold when we limit  the  sample  space to simple and connected  graphs  (rather than  all possible pairings  of the clones.).



The detailed  proof of Theorem  \ref{thm:random} is presented  in Section \ref{sec:random}. In Section \ref{sec:discus}, we explain how to  generalize  the  above  proof ideas  to  analyze  the  adoption  process  on a random  graph  with more general  degree distributions.  We also describe  how to modify our analysis  to more general Bass model with innovators and an SIR (Susceptible-Infected-Remove) epidemic model.



Finally, note that for $k = 2$, the only connected regular graph is a cycle of length $n$. The following proposition asserts  the different time scalings of the major adoption  regimes for such graphs:
\begin{proposition}
\label{cor:cycle}
For  all $n>1$, let the underlying  graph  be a cycle of length $n$.   Then,  for any $0  < \alpha \leq \gamma < 1$,

\begin{align*}
\frac{\Delta_n(\gamma n,\alpha n)}{n} \stackrel{a.s.}{\rightarrow}  \frac{\gamma - \alpha}{\beta}.
\end{align*}

\end{proposition}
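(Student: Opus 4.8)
The plan is to mirror the proof of Theorem~\ref{thm:complete}: write $\Delta_n(\gamma n, \alpha n)$ as a sum of the inter-adoption times $\tau_i$, compute their rates $\lambda_i$ exactly by exploiting the rigid geometry of the cycle, and then pass to the limit via a mean computation together with an exponential concentration bound and Borel--Cantelli. The new and essential ingredient, which makes the rate computation trivial, is the observation that on a cycle the set of adopters is \emph{always a contiguous arc}. This follows by induction on the number of adoptions: at time $0$ the single adopter is an arc of length one, and since a new adoption occurs only when an adopter contacts a non-adopter neighbor, and every non-adopter neighbor of the current arc lies at one of its two ends, each adoption extends the arc by one node at an endpoint. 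Hence the arc structure is preserved.

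Given this, the rate $\lambda_i$ is immediate. For every $1 \le i \le n-1$ the adopter arc has exactly two boundary adopters, each with exactly one non-adopter neighbor. Each node's Poisson clock fires at rate $\beta$ and then selects one of its two neighbors uniformly, so a given node contacts a prescribed neighbor at rate $\beta/2$; a new adoption is produced precisely when a boundary adopter contacts its non-adopter neighbor. Summing the two boundary contributions gives $\lambda_i = 2\cdot(\beta/2) = \beta$ for all $i$. The boundary cases are handled identically and give the same value: for $i=1$ the single adopter has two non-adopter neighbors (total rate $\beta$), and for $i=n-1$ the two boundary adopters share the unique remaining non-adopter (total rate $\beta$). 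Note the contrast with \eqref{eq:rate:complete}: here the rate does \emph{not} grow with $i$, which is exactly why the correct normalization is $\Delta_n/n$ rather than $\Delta_n$.

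Finally, conditioned on the (deterministic) arc structure, the $\tau_i$ are independent $\mathrm{Exp}(\beta)$ random variables, so $\Delta_n(\gamma n, \alpha n) = \sum_{i=\alpha n}^{\gamma n - 1}\tau_i$ is a sum of $(\gamma-\alpha)n$ i.i.d.\ exponentials of mean $1/\beta$, whence $\E{\Delta_n(\gamma n,\alpha n)}/n \to (\gamma-\alpha)/\beta$. To upgrade to almost sure convergence I would establish, exactly as in Claim~\ref{claim:concentration}, a bound of the form $\P{|\Delta_n(\gamma n,\alpha n)/n - (\gamma-\alpha)/\beta| \ge \epsilon} \le e^{-\delta n}$ for some $\delta=\delta(\epsilon)>0$; since $\Delta_n(\gamma n,\alpha n)$ is $\mathrm{Gamma}((\gamma-\alpha)n,\beta)$, the standard Cram\'er/Chernoff bound for sums of i.i.d.\ exponentials supplies such a $\delta$. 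The tail probabilities are then summable in $n$, and Borel--Cantelli yields $\Delta_n(\gamma n,\alpha n)/n \stackrel{a.s.}{\to} (\gamma-\alpha)/\beta$.

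There is no deep obstacle here; this is the most elementary of the three results precisely because the arc structure forces the rates to be constant, and the limiting argument is a textbook law-of-large-numbers statement dressed up with the same concentration-plus-Borel--Cantelli scaffolding used for the complete graph. The only points requiring a little care are the endpoint cases $i=1$ and $i=n-1$ (verified above) and the integer rounding of $\alpha n$ and $\gamma n$, which affects the number of summands by $O(1)$ and hence does not change the $n\to\infty$ limit.
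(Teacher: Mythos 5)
Your proposal is correct and follows essentially the same route as the paper: both rest on the observation that the adopters always form a contiguous arc on the cycle, so exactly two boundary adopters can recruit and $\lambda_i=\beta$ for all relevant $i$, making the $\tau_i$ i.i.d.\ $\mathrm{Exp}(\beta)$. The only (immaterial) difference is the final step, where the paper simply invokes the strong law of large numbers for the i.i.d.\ sum, whereas you reprove it via a Chernoff bound and Borel--Cantelli.
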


\begin{proof}

For  any  $i \geq1$, the  set  of adopters forms a path  of length  $i$.  For  any  $1 <  i <  n$, there  are only two nodes at  the  two ends of the  path  that can contact non-adopters. Suppose node $v$ is an end point, and it makes a contact. With  probability $1/2$, it contacts  a non-adopter neighbor.  Thus, $\lambda_i = \beta$ for any $1 < i <n$, implying that $\tau_i$'s are i.i.d.  By the strong  law of large numbers,  the above limit holds.

\end{proof}

\subsection{Timing in Early Adoption Regime}
\label{subsec:early}


In a major adoption  regime, the rate  of contacts  grows linearly with $n$; \footnote{Except for the special case of a single cycle. }   therefore,  the adoption process spreads  very quickly. Further, as Theorem  \ref{thm:complete} and \ref{thm:random} assert,  the  time  to grow from fraction  $\alpha$ to fraction $\gamma$ is constant.  However, in the  early adoption  regime, the  growth  rate  is much  slower, as there  are only a few adopters. In this  section,  we analyze  the  timing  in this  slower regime; more specifically, we find the limit of the time it takes  to get $\Theta(\log{n})$  adopters. We establish  the limit for both the complete graphs and the random  $k$-regular graphs.  For both graphs, we establish that the time needed to have $\Theta(\log{n})$ adopter scales as $\log{\log{n}}$.  Further, we show the process grows faster  on the complete  graph  (compared  to the random  $k$-regular graph).

\begin{theorem}[Early adoption] For any constant $C>0$ the following hold:
\begin{description}
\item[(a)] If for all $n>1$, graph $G_n$ is a complete graph, then:
\begin{align}
\frac{T_n(C \log{n})}{\log{\log{n}}}  \stackrel{p}{\rightarrow}  \frac{1}{\beta}.
\end{align}
\item[(b)] If for all $n>1$, graph $G_n$ is a uniformly random sample from the set of all connected $k$-regular graphs, where $k \geq 3$ is bounded, then:
\begin{align}
\frac{T_n(C \log{n})}{\log{\log{n}}}  \stackrel{p}{\rightarrow}  \frac{k }{\beta (k-2)}.
\end{align}
\end{description}
\label{thm:sublin}
\end{theorem}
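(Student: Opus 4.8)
The plan is to reduce both parts to the representation already used in the proofs of Theorems~\ref{thm:complete} and~\ref{thm:random}: write $T_n(C\log n) = \sum_{i=1}^{C\log n - 1}\tau_i$, where $\tau_i$ is the holding time to grow the number of adopters from $i$ to $i+1$ and, conditioned on the set of adopters, the $\tau_i$ are independent exponentials with rate $\lambda_i$. The decisive simplification in the early regime is that $i$ never exceeds $C\log n = o(n)$, so every ``depletion'' factor of the form $(1-i/n)$ equals $1-o(1)$ and a contacting adopter essentially always finds a non-adopter. The entire problem thus collapses to (i) identifying $\lambda_i$ to leading order, (ii) computing $\E{T_n(C\log n)}/\log\log n$, and (iii) showing the fluctuations are negligible. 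For the last point I would in both cases use $\Var{T_n(C\log n)} = \sum_i \lambda_i^{-2}$ together with Chebyshev's inequality, since the variance will turn out to be $O(1)$ while the mean is of order $\log\log n$.

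For part~(a) the rates are deterministic given $i$: from \eqref{eq:rate:complete}, $\lambda_i = \beta\, i (n-i)/(n-1) = \beta i\,(1+o(1))$ uniformly over $i \le C\log n$. Hence $\E{T_n(C\log n)} = \sum_{i=1}^{C\log n -1}\lambda_i^{-1} = \beta^{-1}(1+o(1))\sum_{i<C\log n} i^{-1} = \beta^{-1}\big(\log\log n + O(1)\big)$, using $\sum_{i<x} i^{-1} = \log x + O(1)$ with $x = C\log n$ and $\log(C\log n) = \log\log n + O(1)$. Since $\Var{T_n(C\log n)} = \sum_i \lambda_i^{-2} = O\!\big(\sum_i i^{-2}\big) = O(1)$ is negligible against $(\log\log n)^2$, Chebyshev gives $T_n(C\log n)/\log\log n \stackrel{p}{\to} 1/\beta$.

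Part~(b) is where the work lies, and I would reuse the configuration-model coupling from the proof sketch of Theorem~\ref{thm:random}: the graph is generated on the fly, an adopter contacts one of its $k$ clones uniformly, and a new adoption occurs exactly when that clone is unpaired and its random match lands on a non-adopter. The key structural claim is that throughout the early regime the subgraph induced by the adopters is a.a.s.\ a tree: a cycle forms only when a contacted free clone is matched to another free clone of an already-adopted node, and at $i$ adopters there are only $(k-2)i+2 = O(\log n)$ free adopter clones among $\Theta(n)$ unpaired clones, so this costs probability $O(\log n/n)$ per step; a union bound over the $O(\log n)$ steps leaves probability $1 - O((\log n)^2/n) \to 1$ of no cycle. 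On this ``tree'' event the number of boundary clones at $i$ adopters is the deterministic quantity $(k-2)i+2$, each successful contact increases the count by one, and, summing $\beta/k$ over all free adopter clones, the adoption rate is $\lambda_i = \frac{\beta}{k}\big[(k-2)i+2\big]$, which for large $i$ behaves like $\frac{\beta(k-2)}{k}\,i$ --- consistent with the small-$\tilde s$ expansion of \eqref{eq:DE_Rand}. Exactly as in part~(a), on this event $\E{T_n(C\log n)} = \sum_{i<C\log n}\frac{k}{\beta[(k-2)i+2]} = \frac{k}{\beta(k-2)}\big(\log\log n + O(1)\big)$ while $\Var{T_n(C\log n)} = O\!\big(\sum_i i^{-2}\big) = O(1)$, and Chebyshev yields $T_n(C\log n)/\log\log n \stackrel{p}{\to} k/(\beta(k-2))$. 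To conclude unconditionally I would couple the true process with an idealized one whose matchings are forced onto fresh non-adopters, so that its rates are \emph{exactly} the $\lambda_i$ above and its $\tau_i$ are genuinely independent; the two agree on the tree event, so $\P{\,|T_n/\log\log n - k/(\beta(k-2))| > \epsilon\,} \le \P{\text{not a tree}} + (\text{Chebyshev term}) \to 0$.

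The main obstacle, and the place requiring genuine care, is twofold. First, the statement concerns graphs drawn uniformly from connected \emph{simple} $k$-regular graphs, not the raw configuration model; I would argue, as in the proof of Theorem~\ref{thm:random}, that simplicity and connectivity are global events each holding with probability bounded away from zero and not affecting the size-$O(\log n)$ local neighborhood the early process explores, so conditioning only rescales probabilities by a bounded factor and preserves the $1-o(1)$ tree estimate. Second, and more delicate, is ensuring the rate identification $\lambda_i = \frac{\beta}{k}[(k-2)i+2]$ holds uniformly over all $i \le C\log n$ simultaneously and that the rare cycle-creating or adopter-to-adopter matchings cannot perturb the holding times enough to disturb the leading $\log\log n$ term; the explicit coupling above is the cleanest route, since off the vanishing-probability bad event the analysis is exact, and on it the contribution is absorbed into the $o(1)$ probability bound.
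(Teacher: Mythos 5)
Your proposal is correct and follows essentially the same route as the paper's own proof: part (a) via the exact rates \eqref{eq:rate:complete}, the harmonic sum giving $\frac{1}{\beta}\log\log n$, and a bounded-variance Chebyshev argument; part (b) via the configuration-model coupling, a union bound showing the explored component is a tree w.h.p.\ (the paper's Lemma \ref{lem:tree}), the resulting rate $\lambda_i = \frac{\beta}{k}[(k-2)i+2]$ (equivalently $A(j)=k+j(k-2)$ in the paper's iteration indexing), and again Chebyshev. The only cosmetic differences are your explicit idealized-process coupling and your more careful remarks on conditioning on simplicity and connectivity, neither of which changes the substance.
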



The proof of part (a) follows the same line of the proof of Theorem \ref{thm:complete}; note that the rate given by \eqref{eq:rate:complete} holds for any $1 \leq i < n$.  Using this rate,  we first show that $\frac{\E{T_n(C \log{n})}}{\log{\log{n}}} \rightarrow \frac{1}{\beta}$, and then show $\frac{T_n(C \log{n})}{\log{\log{n}}}$  converges to its mean (in probability) by proving its variance  converges to zero. The details  are given in Appendix  \ref{sec:sublin}.

{\bf Proof sketch of part (b)}

As in the  previous  proofs, we aim to compute   $\lambda_i$, for $1 \leq i \leq C \log{n}$.  First  note that we can view the adoption  process as the following:  each edge $(v,u)$ makes contacts  at an independent Poisson rate  $\beta/k$ from $v$ to $u$  and similarly from $u$ to $v$. Given $i$ adopters, the total  number  of contacts  (along edges) that can result  in a new adoption  is the total number  of edges between  the  set of adopters  and  non-adopters.  Observe  that in our coupled process (defined  in proof sketch  of Theorem  \ref{thm:random}), at  each iteration, the  sub-graph  of the  adopters formed  so far is connected;  thus,  the  total  number  of edges that can  result  in a new adoption  is, at  most,  $(k-2)i +2$.   To  compute  the  exact  number  of these  edges,  we use the  {\it locally tree-like} property of random  $k$-regular graphs   \cite{dembo} which confirms that, with high probability, on the  final realized  graph  (i.e.,  after  the  formation  of all edges),  the  subgraph containing  these $i$ nodes is a tree and does not contain  a cycle. Thus, the remaining   $(k-2)i +2$ edges are all between  an adopter  node and  a non-adopter one.  This  implies that with  high probability, $\lambda_i \approx \beta/k [(k-2)i]$ for $1 \leq i \leq C \log{n}$.   Once we have  the rates,  the  rest  of the  proof  handles  the convergence of random  variable $T_n(C \log{n})$ to its mean.  The  detailed  proof is given in Appendix \ref{sec:sublin}.


The locally tree-like property of the random  graph  carries over to a subgraph of size $o(\sqrt{n})$; thus, we can use similar techniques  to prove the following:

\begin{remark}
Similar limit results hold for $\frac{T_n(\sigma(n))}{\log{\sigma(n)}}$ for any $\sigma(n) = o(\sqrt{n})$.
\end{remark}

\section{Proof of Theorem \ref{thm:random}}
\label{sec:random}
In this section we formally prove Theorem \ref{thm:random}. For the sake of brevity, the proof of all technical lemmas of this section is deferred to Appendix \ref{appendixB}.
As explained in the proof sketch, we first study the (edge formation) iteration process and analyze the evolution of the number of adopters in terms of the iterations. Similar to \cite{molloy1,molloy2}, we call this the {\it exploration process}. In this process every node is associated with $k$ clones. For clone $c$ of node $v$, all other clones belonging to node $v$ are considered as $c$'s siblings. We start at iteration $j = 0$. At any iteration $j$ in the exploration process, there are three kinds of clones: `sleeping' clones, `active' clones, and `dead' clones. At the beginning, all clones are sleeping. If all clones of a node are sleeping the node is said to be a sleeping node; if all its clones are dead, the node is considered dead; otherwise, it is considered active. Given this terminology, the exploration process works as follows.

\vspace{0.2in}
\hspace{-0.1in} \textbf{Exploration Process}
	\begin{itemize}
		\item[1.] \emph{Initialization}: Pick a node uniformly at random from the set of all sleeping nodes and set the status of all its clones to active.
	  \item[2.] Repeat the following two steps as long as there are active clones:
	\begin{itemize}
		\item[(a).] Sample a clone $c$ uniformly at random from the set of active clones and kill it.
		\item[(b).] Pair the clone $c$ with clone $c'$ that is chosen uniformly at random from the set of all remaining unpaired (active or sleeping) clones. Kill $c'$ and make all its siblings active.
	\end{itemize}	
	\end{itemize}
	\vspace{0.05in}

The set of active clones at iteration $j$ is denoted by $A(j)$. The union of the set of sleeping and active clones is denoted by $L(j)$; they are called `living' clones.
Finally, we denote the number of sleeping nodes by $N(j)$.

\subsection{Evolution of Exploration Process}
\label{subsec:evolution}

In a second paper, \cite{molloy2} used results by \cite{wormaldDiff2} to track the evolution of the exploration process. We employ a similar technique, but do not directly use Wormald's result. Instead, we use insights from the proof technique and tighten the probability of error needed to get almost sure convergence. When coupling with the adoption process, we introduce an additional variable which tracks the (random) re-scaled time in the {adoption} process.

 At every iteration of the exploration process, the number of living clones reduces by two; i.e., we have $L(j+1) = L(j) - 2$. Hence we have $L(j) = nk - 2j$. At each iteration, the number of sleeping nodes, $N(j)$, reduces by one, if in step $2(b)$ of the exploration process, the clone neighbor chosen ($c'$) is a sleeping one. Otherwise, $N(j)$ remains the same.

After initialization, all nodes are sleeping except the one we have awakened for the initiation (note that when coupling with the adoption process, this will be the first node to  adopt the product). Therefore, at $j=0$, we have $A(0) = k$.

	At every iteration, the evolution of the number of sleeping nodes and the number of active clones is as follows:
	\begin{enumerate}
		\item With probability $\frac{k N(j)}{kn - 2j }$, we have
		\begin{align} \label{evolutionstart}
			N(j+1) &= N(j) - 1 \\
			A(j+1) &= A(j) + (k-2).
		\end{align}
		\item With probability $\left(1 - \frac{k N(j)}{kn - 2j }\right)$, we have
		\begin{align}
			N(j+1) &= N(j) \\
			A(j+1) &= A(j) -2. \label{evolutionend}
		\end{align}
	\end{enumerate}

Note that the above equations (for the evolution of sleeping nodes and active clones) only hold when the graph is connected. Later, in Lemma \ref{lem:connectivity}, we show a random $k$-regular graph is connected with probability $1 - \Theta(n^{-2})$. We further show that our limit result holds when conditioning on being connected. For now, we assume the graph is connected and find the limits of the scaled random variables, $N(j)/n$ and $A(j)/n$, evolving according to \eqref{evolutionstart}-\eqref{evolutionend}.

	 At any iteration $j$ in the exploration process, we have $A(j) = L(j) - kN(j) = k[n - N(j)] - 2j$. So it suffices to characterize the evolution of only one of these parameters, e.g., $N(j)$.
Let $H(j)$ denote the history of the exploration process until iteration $j$. By the equations (\ref{evolutionstart}) - (\ref{evolutionend}), we have
\begin{align} \label{eq:sleeping}
\E{N(j+1) - N(j) | H(j)} = -\frac{k N(j)}{kn - 2j} = -\frac{k N(j)/n}{k - jt/n}.
\end{align}

From the above, using Wormald's result (Theorem 1 in \cite{wormaldDiff2}), it follows that for any $0 < j < (k - \epsilon_0) n/2$, with high probability

\begin{align} \label{sleepingsol}
N(j) = n f(j/n) + o(n),
\end{align}
uniformly over $j$, where

\begin{align} \label{sleepinfucntion}
f(x) =\left(1 - \frac{2x}{k}\right)^{\frac{k}{2}},
\end{align}
which is the unique solution to the differential equation
\begin{align} \label{sleepingde}
f'(x) = - \frac{k f(x)}{k - 2x},
\end{align}
with initial condition $f(0) = 1$. Here $\epsilon_0 > 0$ is a fixed constant.

	The result in \cite{wormaldDiff2} is for a fairly general setting, but the bound given for the probability of convergence is not strong enough for us to prove a.s. convergence. Thus, in the following lemma, we specialize the result of Wormald (Theorem 1 in \cite{wormaldDiff2}) to our case so we can obtain a stronger bound on the probability of event (\ref{sleepingsol}).


\begin{lemma} \label{lem:sleepingsolimproved}
Fix a constant $\epsilon_0 > 0$. For any iteration $0 \leq j < (k/2 - \epsilon_0) n$, with probability $1 - o(n^{-3})$, 		
\begin{align*}
|N(j) - n f(j/n)| \leq \delta_1(n),
\end{align*}
uniformly over $j$, where the function $f(x)$ is defined in (\ref{sleepinfucntion}) and $\delta_1(n) = o(n)$.
\end{lemma}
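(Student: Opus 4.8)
The plan is to bypass Wormald's theorem and instead exhibit an \emph{exact} martingale tailored to the linear drift of $N(j)$, to which a single application of Azuma--Hoeffding yields a stretched-exponential (hence $o(n^{-3})$) tail. The key observation is that the conditional drift in \eqref{eq:sleeping} is linear in $N(j)$: $\E{N(j+1)\mid H(j)} = N(j)\bigl(1 - \tfrac{k}{kn-2j}\bigr)$. This linearity lets me cancel the drift with a deterministic integrating factor. I would set $h(0)=1$ and define
\[
h(j) = \prod_{i=0}^{j-1}\frac{kn-2i}{kn-2i-k}, \qquad W(j) = N(j)\,h(j).
\]
A one-line computation using \eqref{eq:sleeping} shows $\E{W(j+1)\mid H(j)} = h(j+1)N(j)\tfrac{kn-2j-k}{kn-2j} = h(j)N(j) = W(j)$, so $W(j)$ is a martingale with respect to $\{H(j)\}$, with $W(0) = N(0) = n-1$.

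Next I would record the bounded-difference parameter. Conditioned on $H(j)$, the variable $N(j+1)$ takes only the two values $N(j)$ and $N(j)-1$, so $W(j+1)$ takes the two values $h(j+1)N(j)$ and $h(j+1)(N(j)-1)$; hence the centered increment $W(j+1)-W(j)$ lies in an interval of length $c_j := h(j+1)$. For $0 \le j < (k/2-\epsilon_0)n$ we have $kn-2j > 2\epsilon_0 n$, so every factor of $h$ is $1 + O(1/n)$ and $h(j)$ is bounded above by a constant $C(\epsilon_0)$; consequently $\sum_{j=0}^{m-1} c_j^2 = O(n)$ for $m = \lfloor (k/2-\epsilon_0)n\rfloor$. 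Applying the maximal form of Azuma--Hoeffding (Hoeffding's lemma on each increment together with Doob's inequality) gives, for any $t>0$,
\[
\P{\max_{0\le j\le m}\bigl|W(j)-(n-1)\bigr| \ge t} \le 2\exp\!\Bigl(-\tfrac{2t^2}{\sum_{j} c_j^2}\Bigr) \le 2\exp\!\bigl(-\Omega(t^2/n)\bigr).
\]
Choosing $t = n^{2/3}$ (any $t$ with $\sqrt{n\log n}\ll t = o(n)$ works) makes the right-hand side $\exp(-\Omega(n^{1/3})) = o(n^{-3})$, uniformly over $j$.

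It remains to transfer this control back to $N$. Dividing by $h(j)$, the event above gives $|N(j) - (n-1)/h(j)| \le t/h(j) = O(t)$ for all $j \le m$ simultaneously. Finally I would show $(n-1)/h(j) = nf(j/n) + O(1)$, where $f$ is as in \eqref{sleepinfucntion}: taking logarithms, $\log h(j) = \sum_{i=0}^{j-1}\log\bigl(1+\tfrac{k}{kn-2i-k}\bigr)$, and comparing this Riemann sum with $\int_0^{j}\frac{k}{kn-2x}\,dx = -\log f(j/n)$ incurs a discretization error of $O(j/n^2) = O(1/n)$, so $h(j) = f(j/n)^{-1}(1+O(1/n))$ and hence $(n-1)/h(j) = nf(j/n)+O(1)$. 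Combining, $|N(j)-nf(j/n)| \le O(t) + O(1) =: \delta_1(n) = o(n)$ uniformly over $j$, with probability $1-o(n^{-3})$, as claimed.

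The main obstacle is not any single step but making the probability bound strong enough: Wormald's generic estimate is too weak to survive the Borel--Cantelli argument needed later for almost-sure convergence, and the whole value of the lemma is precisely the $o(n^{-3})$ tail. The exact martingale is what buys this, since it reduces the problem to a clean bounded-difference estimate with increment scale $O(1)$ over $O(n)$ steps, for which Azuma gives a stretched-exponential tail with room to spare. The only real care is keeping the integrating factor $h(j)$ bounded, which is exactly why one must stop at $j < (k/2-\epsilon_0)n$: as $j \to kn/2$ the trajectory $f(j/n)\to 0$, the factor $h(j)\to\infty$, and both the increment scale and the relative error blow up.
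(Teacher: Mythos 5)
Your proof is correct, but it takes a genuinely different route from the paper's. The paper specializes Wormald's differential-equation method: it cuts $[0,(k/2-\epsilon_0)n]$ into blocks of length $\lambda = n/(\log n)^2$, builds a supermartingale with a drift-correction term $\Delta(n)=\Phi\lambda/n$ inside each block using only the Lipschitz continuity of $F(x,y)=ky/(k-2x)$, applies a supermartingale concentration bound per block, and then propagates the error across the $O((\log n)^2)$ blocks by induction, with the error growing like $(1+\Delta(n))^i$ (a discrete Gronwall argument). You instead exploit the fact that the one-step drift \eqref{eq:sleeping} is exactly linear in $N(j)$ with deterministic coefficients, so the integrating factor $h(j)$ turns $N(j)h(j)$ into an \emph{exact} martingale with deterministic increment scales $c_j=h(j+1)=O(1)$; a single Doob--Azuma application then gives a uniform $\exp(-\Omega(n^{1/3}))$ tail, and a deterministic Riemann-sum comparison identifies $(n-1)/h(j)$ with $nf(j/n)+O(1)$. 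Your argument is shorter, yields an explicit and much smaller $\delta_1(n)$ (order $n^{2/3}$, or even $\sqrt{n\log n}$ at the cost of a polynomial tail), and makes transparent why one must stop at $j<(k/2-\epsilon_0)n$ (boundedness of $h$). What it gives up is robustness: it hinges on the drift being affine in the state, whereas the paper's block-plus-Lipschitz scheme only needs $F$ Lipschitz and hence carries over directly to the extensions of Section \ref{sec:discus} (general degree distributions, innovators, SIR), and it does not require a closed form for $f$. Two small points you should make explicit: the interval endpoints for the increments of $W$ are $H(j)$-measurable, so you need the conditional (interval) form of Azuma--Hoeffding rather than the symmetric $|W(j+1)-W(j)|\le c_j$ form; and, like the paper, you are tacitly assuming the evolution equations \eqref{evolutionstart}--\eqref{evolutionend} hold throughout (i.e., $A(j)>0$), which is justified only after the conditioning-on-connectivity step later in Section \ref{sec:random}. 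Neither is a gap.
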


Next we find an approximation for $A(j)$: define $g(x) \triangleq k(1 - f(x)) - 2x$. The following corollary is an immediate corollary of Lemma \ref{lem:sleepingsolimproved}.

\begin{corollary} \label{cor:activesol}
With probability $1 - o(n^{-3})$, for any $0 < j < (k - \epsilon_0) n/2$, we have
\begin{align}
|A(j) - ng(j/n)| \leq \delta_2(n),
\end{align}
uniformly over $j$, for some $\delta_2(n) = o(n)$.
\end{corollary}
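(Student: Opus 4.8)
The plan is to exploit the exact affine relationship between $A(j)$ and $N(j)$ recorded just before the statement, namely $A(j) = k[n - N(j)] - 2j$, so that the approximation for $A(j)$ transfers directly from the approximation for $N(j)$ in Lemma \ref{lem:sleepingsolimproved}, with no additional probabilistic argument needed.

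First I would unpack the definition $g(x) = k(1 - f(x)) - 2x$ to obtain $ng(j/n) = kn - kn\,f(j/n) - 2j$. Subtracting this from the deterministic identity $A(j) = kn - kN(j) - 2j$---which holds at every iteration, since $L(j) = kn - 2j$ and the sleeping clones number exactly $kN(j)$---yields $A(j) - ng(j/n) = -k\bigl[N(j) - nf(j/n)\bigr]$. Taking absolute values gives the clean relation $|A(j) - ng(j/n)| = k\,|N(j) - nf(j/n)|$.

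Next I would invoke Lemma \ref{lem:sleepingsolimproved}: on an event of probability $1 - o(n^{-3})$, one has $|N(j) - nf(j/n)| \leq \delta_1(n) = o(n)$ uniformly over the admissible range of $j$. Combining this with the identity above, on the same event $|A(j) - ng(j/n)| \leq k\,\delta_1(n)$, so setting $\delta_2(n) \triangleq k\,\delta_1(n)$---which is $o(n)$ because $k$ is a bounded constant---establishes the corollary.

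The only point requiring care is the alignment of the index ranges: Lemma \ref{lem:sleepingsolimproved} is stated for $0 \leq j < (k/2 - \epsilon_0)n$, whereas the corollary demands $0 < j < (k - \epsilon_0)n/2 = (k/2 - \epsilon_0/2)n$. Since $\epsilon_0$ is an arbitrary fixed positive constant, I would simply apply the lemma with $\epsilon_0/2$ in place of $\epsilon_0$, which covers the corollary's interval. I do not anticipate any genuine obstacle---the entire probabilistic content is carried by Lemma \ref{lem:sleepingsolimproved}, and the corollary merely propagates the concentration of $N(j)$ to $A(j)$ through their deterministic affine link.
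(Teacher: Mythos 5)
Your proposal is correct and is exactly the argument the paper intends: the paper calls the corollary ``immediate'' from Lemma \ref{lem:sleepingsolimproved} precisely because of the deterministic identity $A(j) = k[n - N(j)] - 2j$ noted just before it, which gives $|A(j) - ng(j/n)| = k\,|N(j) - nf(j/n)|$ and hence $\delta_2(n) = k\,\delta_1(n)$. Your handling of the index range via rescaling $\epsilon_0$ is a reasonable way to resolve a minor notational inconsistency in the paper's statements.
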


We now find an approximation for the number of iterations needed to have $\alpha n$ active or dead nodes. Note that when coupled with the {adoption} process, these are the adopter nodes.
Let $\delta(n) = \max{\{\delta_1(n), \delta_2(n)\}}$ for $\delta_1(n)$ and $\delta_2(n)$, as defined in Lemma \ref{lem:sleepingsolimproved} and Corollary \ref{cor:activesol}.
Let $\mathbf{A} \triangleq (A(j), \ 0 \leq j \leq (k - \epsilon_0)n/2)$ be the vector of the random number of active clones and let $\mathbf{a}$ be a particular realization of this random vector. Also, let $\mathcal{S} \triangleq \{\mathbf{a} : |a(j) - ng(j/n)| \leq \delta(n) \}$. In the next auxiliary lemma, we show the number of iterations needed to have $\alpha n$ active or dead nodes, denoted by $J_{\alpha}$, is close to $f^{-1}(1-\alpha)$.

\begin{lemma} \label{lem:timebound}
Let $j_{\alpha} \triangleq f^{-1}(1-\alpha)$ and $c = |f'\left( \frac{k - \epsilon_0}{2} \right)|$.
If $\mathbf{A} \in \mathcal{S}$, for any $\alpha$
\begin{align*}
|J_{\alpha} - n j_{\alpha}| &\leq \frac{2}{c} \delta(n).
\end{align*}
\end{lemma}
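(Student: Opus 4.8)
The plan is to reduce the claim to a statement about the sleeping-node count $N(j)$ and then to invert the deterministic curve $f$. First I would rewrite $J_\alpha$ in terms of $N$: the active-or-dead (adopter) nodes at iteration $j$ number exactly $n - N(j)$, which is non-decreasing and increases by at most one per iteration, so $J_\alpha = \min\{j : n - N(j) \ge \alpha n\}$ is the first iteration at which $N(j)$ drops to $(1-\alpha)n$ (up to an additive $O(1)$ coming from the unit increments). Next I would convert the hypothesis $\mathbf{A} \in \mathcal{S}$ into a uniform bound on $N$. Using $A(j) = k[n - N(j)] - 2j$ together with $g(x) = k(1 - f(x)) - 2x$, a one-line computation gives the exact identity $A(j) - n g(j/n) = -k\,[N(j) - n f(j/n)]$. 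Hence $\mathbf{A}\in\mathcal{S}$ is equivalent to $|N(j) - n f(j/n)| \le \delta(n)/k$ uniformly over the admissible range of $j$.

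With this in hand, the second step is an inversion argument. At $j = J_\alpha$ we have $N(J_\alpha) = (1-\alpha)n + O(1)$, while by definition $n f(j_\alpha) = n f(f^{-1}(1-\alpha)) = (1-\alpha)n$. Combining these with the uniform bound from the first step yields $|n f(J_\alpha/n) - n f(j_\alpha)| \le \delta(n)/k + O(1)$, i.e.\ $|f(J_\alpha/n) - f(j_\alpha)|$ is at most $\delta(n)/(kn)$ up to lower order. It then remains to pass from control of $f$ to control of its argument. From \eqref{sleepinfucntion} one has $f'(x) = -(1 - 2x/k)^{(k-2)/2}$, which is strictly negative with $|f'|$ monotonically decreasing on $[0, (k-\epsilon_0)/2]$; therefore $|f'(x)| \ge |f'((k-\epsilon_0)/2)| = c$ throughout that interval. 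Applying the mean value theorem, $|f(J_\alpha/n) - f(j_\alpha)| \ge c\,|J_\alpha/n - j_\alpha|$, and rearranging gives $|J_\alpha - n j_\alpha| \le \delta(n)/(ck)$, which sits comfortably inside the claimed $(2/c)\delta(n)$ (the stated constant is a safe, non-tight bound that also absorbs the $O(1)$ rounding term).

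The main obstacle is range control: every estimate above requires that both $j_\alpha$ and the random $J_\alpha/n$ lie inside $[0, (k-\epsilon_0)/2]$, where the approximation of Lemma \ref{lem:sleepingsolimproved} and Corollary \ref{cor:activesol} is valid and where $|f'|$ is bounded below. The delicate point is that $|f'| \to 0$ as $x \to k/2$ (the endpoint at which the exploration exhausts its living clones), so the lower bound $c$ degrades there; I would handle this by noting that for any fixed $\alpha < 1$ one has $j_\alpha = (k/2)\big(1 - (1-\alpha)^{2/k}\big) < k/2$ strictly, so the free constant $\epsilon_0$ can be chosen small enough that $j_\alpha < (k-\epsilon_0)/2$ with room to spare. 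A short bootstrap — using the deterministic curve to locate $J_\alpha/n$ near $j_\alpha$ before invoking the bound at $j = J_\alpha$ — then guarantees that $J_\alpha/n$ also stays in range, so the mean value theorem may legitimately be applied with the intermediate point $\xi$ lying between $J_\alpha/n$ and $j_\alpha$. The remaining subtlety, the integer rounding of $N$, is harmless because $\delta(n) = \omega(1)$.
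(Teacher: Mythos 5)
Your proof is correct and takes essentially the same route as the paper's: both arguments rest on the uniform closeness $|N(j)-nf(j/n)|\lesssim \delta(n)$ (which you cleanly extract from $\mathbf{A}\in\mathcal{S}$ via the identity $A(j)-ng(j/n)=-k[N(j)-nf(j/n)]$) combined with the mean value theorem and the lower bound $|f'(x)|\geq c$ on $[0,(k-\epsilon_0)/2]$. The only cosmetic difference is that you invert $f$ at the random iteration $J_\alpha$ whereas the paper evaluates $N$ at the deterministic iterations $nj_\alpha\pm\frac{2}{c}\delta(n)$ and uses the monotonicity of $N$ --- which is exactly the ``bootstrap'' you yourself invoke for range control, so the two arguments coincide.
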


\subsection{Coupling Exploration and Adoption Processes}
\label{subsec:coupling}

We now couple the exploration and {adoption} processes. At any time, the set of active and dead nodes correspond to
the set of adopters. At a random time, one of these nodes ``times-out" and decides to use one of its outgoing edges (or one of its clones, say clone $c$) to contact a neighbor
	at the other end of this outgoing edge. Then, one of the following happens:

	
\begin{itemize}
\item[Case 1:] Clone $c$ has already been paired with clone $c'$.
\begin{itemize}
\item[(a)] Clones $c$ and $c'$ are dead clones.
\item[(b)] The node to which clone $c'$ belongs has already adopted.
\end{itemize}
\item[Case 2:]
Clone $c$ has not been paired yet. It chooses clone $c'$ uniformly at random among all living clones.
\begin{itemize}
\item[(a)] Clone $c$ is an active clone.
\item[(b)] If clone $c'$ is a sleeping clone, the node to which it belongs will adopt; if clone $c'$  is an active clone,  the node to which it belongs has  adopted before this contact.
\end{itemize}
\end{itemize}

This implies only contacts made through active clones will ensure the exploration process proceeds. Further, only these contacts may result in the growth of the adoption set.
As mentioned, we can view the {adoption process} in the following way: each clone (or half edge) makes contact at an independent Poisson process at rate $\beta/k$. Therefore, at any iteration $j$, the time it takes to add one edge (go to iteration (j+1)) is an exponential random variable with rate $\frac{\beta}{k}A(j)$. We denote this random variable by $\tilde{\tau}(j)$. Using the notation defined in Section \ref{sec:modelNmain} and in the previous subsection, it follows that

\begin{align*} 
	\Delta_n(\alpha n, \gamma n) = \sum_{j = J_{\alpha}}^{J_{\gamma}} \tilde{\tau}(j).
\end{align*}

%

\begin{remark}	\label{rem:independence}
		Conditioned on $\left\{\mathbf{A} = a\right\}$, the random variables $\tilde{\tau}(j)$ for $0 < j < (k - \epsilon_0) n/2$ are independent exponentially distributed random variables with mean $\frac{k}{\beta A(j)}$.
\end{remark}

First, we compute $\E{\Delta_n(\alpha n, \gamma n) | \mathbf{A} = a \in \mathcal{S}}$ . By Lemma \ref{lem:timebound}, for $a \in \mathcal{S}$, we have $\underline{\Delta} \leq \Delta_n(\alpha n, \gamma n) \leq \overline{\Delta}$, where

\begin{align*}
\underline{\Delta} \triangleq \sum_{n j_{\alpha} +  \frac{2}{c} \delta(n)}^{n j_{\gamma} -  \frac{2}{c} \delta(n)} \tilde{\tau}(j), \quad \tm{and}  \quad
\overline{\Delta} \triangleq \sum_{n j_{\alpha} -  \frac{2}{c} \delta(n)}^{n j_{\gamma} +  \frac{2}{c} \delta(n)} \tilde{\tau}(j)
\end{align*}

\noindent{Next, we compute}
\begin{align*}
\mathbf{E}[\overline{\Delta} | \mathbf{A} = a \in \mathcal{S}] &= \sum_{n j_{\alpha} - \frac{2}{c} \delta(n)}^{ n j_{\gamma} +  \frac{2}{c} \delta(n)} \frac{k}{\beta A(t)}
 \geq   \frac{k}{\beta}\sum_{n j_{\alpha} - \frac{2}{c} \delta(n)}^{n j_{\gamma} +  \frac{2}{c} \delta(n)} \frac{1}{n g(t/n) + \delta(n)} \\
&=  \frac{k}{\beta} \int_{j_{\alpha}}^{j_{\gamma}} \frac{1}{g(x)} dx    +   o(1)
 = \tilde{\theta}(\gamma) - \tilde{\theta}(\alpha) + o(1),
\end{align*}
where the inequality holds by Corollary \ref{cor:activesol}. In a similar way, we can prove an upper bound for $\mathbf{E}[\overline{\Delta} | \mathbf{A} = a \in \mathcal{S}]$. These bounds prove that

\begin{align} \label{eq:limit:exp}
\mathbf{E}[\overline{\Delta} | \mathbf{A} = a \in \mathcal{S}] \rightarrow \tilde{\theta}(\gamma) - \tilde{\theta}(\alpha).
\end{align}

\noindent{Now, we are in a position to prove Theorem \ref{thm:random}. Let $\mathcal{E}\triangleq \overline{\Delta}- \E{\overline{\Delta}}$. Fix $\epsilon >0$. 	We start with}

\begin{align}\label{eq:bouding}
	\mathbf{P}(|\mathcal{E}| \geq \epsilon) &\leq \sum_{{a} \in \mathcal{S}} \mathbf{P}(|\mathcal{E}| \geq \epsilon | \mathbf{A}  = a) \mathbf{P}(\mathbf{A} = a)
	+ \mathbf{P}(\mathbf{A} \notin \mathcal{S})
\end{align}
 By Corollary \ref{cor:activesol}, the second term is $o(n^{-3})$. To bound the first term, we first notice that for any $\epsilon_0 n \leq j \leq kn/2  - \epsilon_0 n$ and any $p > 0$, conditioned on $a \in \mathcal{S}$, we have:
\begin{align} \label{expbound}
	\mathbf{P}\left(\tilde{\tau}(j) \geq \frac{k p \log n}{\beta \underline{g} n} \Big|\mathbf{A}  = a \right) \leq n^{-p},
\end{align}
where $\underline{g} \triangleq \frac{1}{2} n \min_{x \in [\epsilon_0, k/2 - \epsilon_0]} g(x)$, and we use the following remark:

\begin{remark} \label{activelower}
If $\mathbf{A} \in \mathcal{S}$, then $A(j) \geq n \min_{x \in [\epsilon_0, k/2 - \epsilon_0]} g(x) + o(n) \geq \frac{1}{2} n \min_{x \in [\epsilon_0, k/2 - \epsilon_0]} g(x)$, for all $\epsilon_0 n \leq j \leq (k/2 - \epsilon_0)n$ and large enough $n$.
\end{remark}

Let ${B}$ be the event that $\left\{\tilde{\tau}(j) \leq \frac{kp \log n}{\beta \underline{g} n}\right\}$ for all $\epsilon_0 n \leq j \leq kn/2  - \epsilon_0 n$ and conditioned on $a \in \mathcal{S}$.
By \eqref{expbound} and the union bound, we have $\P{B} \geq 1 - n^{1-p}$, or equivalently, $\P{\overline{B}} \leq n^{1-p}$. Going back to bounding the first term of \eqref{eq:bouding}, for any $a \in \mathcal{S}$ it follows that
\begin{align*}
\mathbf{P}(|\mathcal{E}| \geq \epsilon | \mathbf{A}  = a) \leq \mathbf{P}(|\mathcal{E}| \geq \epsilon | \mathbf{A}  = a, B) + n^{1-p}.
\end{align*}

{\noindent Applying the Hoeffding bounding to the first term, we  get}
\begin{align*}
\mathbf{P}(|\mathcal{E}| \geq \epsilon | \mathbf{A}  = a) \leq
 2\exp\left(-\frac{2n}{\log n} \frac{\epsilon^2 \beta^2 \underline{g}^2}{k^2 p^2} \right) + O(n^{1 - p}).
\end{align*}

{\noindent Using \eqref{eq:limit:exp}, for a  large enough $n$, we have: }
\begin{align*}
\mathbf{P}(|\overline{\Delta} - [\tilde{\theta}(\gamma) - \tilde{\theta}(\alpha)]| \geq 2\epsilon | \mathbf{A}  = a) \leq
 2\exp\left(-\frac{2n}{\log n} \frac{\epsilon^2 \beta^2 \underline{g}^2}{k^2 p^2} \right) + O(n^{1 - p}).
\end{align*}

{\noindent We can prove a similar bound for $|\underline{\Delta} - [\tilde{\theta}(\gamma) - \tilde{\theta}(\alpha)]|$. Since $\underline{\Delta} \leq \Delta_n(\alpha n, \gamma n) \leq \overline{\Delta}$, the same bound holds for $|\Delta_n(\alpha n, \gamma n) - [\tilde{\theta}(\gamma) - \tilde{\theta}(\alpha)]|$}. Let $p = 4$. So far we have proved that

\begin{align}\label{eq:result:config}
\mathbf{P}(|\Delta_n(\alpha n, \gamma n) - [\tilde{\theta}(\gamma) - \tilde{\theta}(\alpha)]| \geq 2 \epsilon)  = O(n^{-3}).
\end{align}

The last step of the proof is to show that when conditioning on the event that the graph obtained as a result of matching random clones is a simple and connected graph, the same order of error probability (i.e., $O(n^{-3})$) holds. Let  $\mathcal{G}$ denote the graph eventually obtained after $nk/2$ iterations. Conditioned on being connected and simple, $\mathcal{G}$ is a $k$-regular graph sampled uniformly among all connected simple $k$-regular graphs. It follows from inequality \eqref{eq:result:config} that

\begin{align}\label{eq:result:simple}
\mathbf{P}\left(\left\{|\Delta_n(\alpha n, \gamma n) - [\tilde{\theta}(\gamma) - \tilde{\theta}(\alpha)]| \geq 2 \epsilon \right\}\cap \left\{\{\mathcal{G} \mbox{ is connected}\} \cap \{ \mathcal{G} \mbox{ is simple}\} \right\}\right)
=  O(n^{-3}).
\end{align}


To prove that conditioned on $\left\{\{\mathcal{G} \mbox{ is connected}\} \cap \{ \mathcal{G} \mbox{ is simple}\} \right\}$, the error probability is $O(n^{-3})$, it suffices to show  $\mathbf{P}(\{\mathcal{G} \mbox{is connected}\} \cap \{ \mathcal{G} \mbox{ is simple}\} ) = O(1)$. This follows from the next two lemmas.

\begin{lemma} \label{lem:probsimple}
$\mathbf{P}(\mathcal{G} \mbox{ is simple})  = \Theta(1)$.
\end{lemma}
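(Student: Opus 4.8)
The plan is to recognize Lemma \ref{lem:probsimple} as the classical fact that the configuration model on $n$ vertices of constant degree $k$ produces a simple graph with asymptotically constant probability, and to prove it by the method of moments: I would show that the number of self-loops and the number of multi-edges converge jointly to \emph{independent} Poisson random variables, so that the event of having neither occurs with probability tending to a strictly positive constant. Concretely, let $X_1$ denote the number of self-loops (pairs of clones of the same node matched to each other) and $X_2$ the number of double edges (two distinct edges joining the same pair of nodes) in the random pairing that generates $\mathcal{G}$. Since $k$ is bounded, I would first argue that edges of multiplicity three or higher, as well as configurations with two or more self-loops at a common vertex, contribute only $o(1)$ to the relevant counts, so that being simple is asymptotically equivalent to $\{X_1 = 0\} \cap \{X_2 = 0\}$.

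Next I would compute the limiting means. Writing $N = nk$ for the total number of clones, a self-loop at a fixed node uses an unordered pair of its $k$ clones, matched to each other with probability $\frac{1}{N-1}$, so
\begin{align*}
\mathbf{E}[X_1] = n \binom{k}{2}\frac{1}{nk-1} \;\longrightarrow\; \frac{k-1}{2} \;=:\; \lambda_1 .
\end{align*}
Similarly, a double edge between a fixed pair of nodes selects two clones from each node and pairs them into two cross edges (two such pairings), both matches occurring with probability $\frac{1}{(N-1)(N-3)}$, giving
\begin{align*}
\mathbf{E}[X_2] = \binom{n}{2}\binom{k}{2}^{2}\cdot 2 \cdot \frac{1}{(nk-1)(nk-3)} \;\longrightarrow\; \frac{(k-1)^2}{4} \;=:\; \lambda_2 .
\end{align*}
Both limits are finite positive constants because $k$ is bounded.

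To upgrade these means to joint Poisson convergence, I would compute the mixed factorial moments $\mathbf{E}\!\left[(X_1)_{r}\,(X_2)_{s}\right]$ (falling factorials) by summing over ordered tuples of distinct potential loops and distinct potential double edges; the dominant contribution comes from configurations whose underlying clone sets are disjoint, and the probability that a fixed collection of $m$ disjoint clone-pairs are all matched is $\prod_{i=0}^{m-1}\frac{1}{N-1-2i}$, which factorizes to leading order. Overlapping configurations are lower order, so the factorial moments converge to $\lambda_1^{\,r}\lambda_2^{\,s}$, which is exactly the signature of independent Poisson$(\lambda_1)$ and Poisson$(\lambda_2)$ limits. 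Consequently
\begin{align*}
\mathbf{P}(\mathcal{G}\text{ is simple}) \;\longrightarrow\; e^{-\lambda_1}e^{-\lambda_2} = \exp\!\left(-\frac{k-1}{2}-\frac{(k-1)^2}{4}\right) \;>\; 0,
\end{align*}
which is $\Theta(1)$ as claimed. The main obstacle is the combinatorial bookkeeping in the factorial-moment calculation: one must verify carefully that overlapping or higher-multiplicity structures are negligible and that the disjoint-support probabilities factorize in the limit, so that the loops and double edges are asymptotically independent rather than merely marginally Poisson. This Poisson-approximation argument for the configuration model is classical, so I would either reproduce this short computation or cite the standard configuration-model references already in use (in the spirit of \cite{Wormald, molloy1}).
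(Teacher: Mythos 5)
Your proposal is correct: the limiting means $\lambda_1 = \tfrac{k-1}{2}$ and $\lambda_2 = \tfrac{(k-1)^2}{4}$ and the resulting limit $e^{-\lambda_1-\lambda_2}>0$ match the classical Bender--Canfield/Bollob\'as result for the $k$-regular configuration model. The paper itself gives no proof of this lemma, citing it as well known (\cite{Wormald}, \cite{Bender}); your method-of-moments Poisson approximation is precisely the standard argument behind those citations, so the two are in essence the same.
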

The above lemma is a well-known result (see for example \cite{Wormald}, \cite{Bender}). It allows us to carry over properties which hold with high probability from the configuration model to the subset of simple random graphs. The next lemma shows a uniform sample among simple graphs is connected with probability $1 - O(n^{-2})$:

\begin{lemma} \label{lem:connectivity} $\mathbf{P}( \{\mathcal{G} \mbox{ is not connected}\} \cap \{ \mathcal{G} \mbox{ is simple}\})  = O(n^{-2})$.
\end{lemma}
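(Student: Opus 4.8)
The plan is to bound the configuration-model probability directly by a union bound over the smallest component, exploiting that simplicity forces every component to be reasonably large. First I would record the structural reduction: if $\mathcal{G}$ is simple and disconnected, then each connected component is itself a simple $k$-regular graph and hence contains at least $k+1$ nodes; taking $S$ to be a smallest component yields a vertex set with $k+1 \le |S| \le \lfloor n/2 \rfloor$ all of whose $|S|k$ clones are matched internally, so that no edge crosses the cut $(S, S^c)$. Thus the event in question is contained in the event that the pairing produces some \emph{isolated} set $S$ with $k+1 \le |S| \le \lfloor n/2 \rfloor$.

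Next I would compute, for a fixed set $S$ of size $m$, the probability that all $mk$ of its clones are matched among themselves. In the uniform random perfect matching on the $nk$ clones this probability is
\begin{align*}
p_m = \frac{(mk-1)!!\,\bigl((n-m)k-1\bigr)!!}{(nk-1)!!}
\end{align*}
when $mk$ is even (and $p_m = 0$ otherwise). A union bound over the $\binom{n}{m}$ choices of $S$ then gives
\begin{align*}
\mathbf{P}\bigl(\{\mathcal{G} \text{ not connected}\} \cap \{\mathcal{G} \text{ simple}\}\bigr) \le \sum_{m=k+1}^{\lfloor n/2\rfloor} \binom{n}{m}\, p_m.
\end{align*}
Here the simplicity hypothesis is used only to start the sum at $m = k+1$; beyond that I simply bound the probability of having a large isolated set and discard simplicity. (Note $m = k+1$ gives $mk = k(k+1)$, always even, so $p_{k+1} > 0$ and there is no parity obstruction at the leading term.)

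The remaining work is to estimate the sum. Using Stirling's formula together with $(2\ell-1)!! = (2\ell)!/(2^\ell \ell!)$, the general term factors, up to polynomial corrections, as $\binom{n}{m}\,p_m \approx \left(\tfrac{m}{n}\right)^{m(k-2)/2}\left(\tfrac{n-m}{n}\right)^{(n-m)(k-2)/2}$. For $m$ of constant order this is $\Theta\!\bigl(n^{-m(k-2)/2}\bigr)$, decreasing in $m$, so the largest such term is the $m = k+1$ term, of order $n^{-(k+1)(k-2)/2}$; since $(k+1)(k-2)/2 \ge 2$ for every $k \ge 3$ (with equality exactly at $k=3$, $m=4$, i.e.\ an isolated $K_4$), this contributes $O(n^{-2})$. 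For $m = \Theta(n)$ the exponent is $\tfrac{k-2}{2}\,n\bigl[x\log x + (1-x)\log(1-x)\bigr]$ with $x = m/n \in (0, 1/2]$, which is negative and linear in $n$, so these terms are exponentially small and negligible. I would make the intermediate range rigorous either by controlling the ratio of consecutive terms to show the summand is dominated by its endpoint values, or by splitting the sum at $m = n^{1/2}$ and bounding the two pieces separately, concluding that the entire sum is $O(n^{-2})$.

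The main obstacle is precisely this uniform control of the summand over the whole range $k+1 \le m \le \lfloor n/2 \rfloor$: the two extreme regimes (constant $m$, polynomially small; linear $m$, exponentially small) are immediate, but ruling out a large contribution from the intermediate regime requires a clean monotonicity or entropy estimate on $\binom{n}{m}\,p_m$. Everything else — the reduction to isolated sets, the exact formula for $p_m$, and the identification of the $k=3$, $m=4$ case as the dominant $O(n^{-2})$ term — is routine.
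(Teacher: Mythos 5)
Your proposal is correct and follows essentially the same route as the paper's proof: a union bound over isolated vertex sets in the configuration model, the same internal-pairing probability (your double-factorial expression for $p_m$ equals the paper's ratio of binomial coefficients), and the same identification of the dominant case $k=3$ with a smallest component of $4$ nodes, giving exponent $(k+1)(k-2)/2 \ge 2$. The intermediate range of $m$ that you flag as the remaining work is exactly the step the paper also leaves implicit, asserting only that the bound is asymptotically largest for constant component size and citing standard techniques.
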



Putting \eqref{eq:result:simple} and Lemmas \ref{lem:probsimple} and \ref{lem:connectivity} together implies that for any graph $G_n$ sampled uniformly at random from the set of all connected $k$-regular graphs with $n$ nodes and a large enough $n$,

\begin{align}\label{eq:result:final}
\mathbf{P}\left(|\Delta_n(\alpha n, \gamma n) - [\tilde{\theta}(\gamma) - \tilde{\theta}(\alpha)]| \geq 2 \epsilon\right) =  O(n^{-3}) < n^{-2}.
\end{align}

Since $\epsilon$ is arbitrary, and $\sum_{n=1}^{\infty} n^{-2}  < \infty$, the theorem (the limit \eqref{eq:timeLimitR}) follows by applying the Borrel-Cantelli lemma.

\section{Discussions}
\label{sec:discus}


The techniques  developed here to analyze the {adoption process} for random $k$-regular graphs can be generalized to the following settings.

{\bf Random graphs with given degree distributions:} We can analyze the {adoption process} defined in Section \ref{sec:modelNmain} for a more general class of random graphs with a given degree distribution. For these random graphs, we couple the {adoption process} and the configuration model. We first analyze the exploration process. Let $D$ be the set of all degrees with a nonzero probability in the degree distribution. Let $N_d(j)$ and $A_d(j)$ to be the number of sleeping nodes of degree $d$ and the number of active clones belonging to a node of degree $d$. We can write equations similar to \eqref{evolutionstart}-\eqref{evolutionend} for the evolution of the vectors $\left(N_d(j), d \in D\right)$ and $\left(A_d(j), d \in D\right)$. Again, we prove an approximation for $1/n \left(N_d(j), d \in D\right)$ using Wormald's result \cite{wormaldDiff2}. We connect the exploration process to the {adoption process} in the same way as in Subsection \ref{subsec:coupling}.

For comparative purposes, see Figure \ref{fig:Simulations}. The figure shows simulation results comparing the process on three different random graphs with mean degree $5$. The theoretical curve derived in this paper is included for reference and matches
very well with the simulation results. Interestingly, when there is more heterogeneity in the degree distribution in the network, the process proceeds more or less at the same rate in the beginning, but  slows down towards  the  end for those with higher disparity.
Arguably,
towards the end of the process, most of the higher degree nodes have already adopted, and the rate of spread is limited
by the remaining lower degree nodes.

\begin{figure}[tbh]
\centering
{\includegraphics[width=16cm,height=7.5cm]{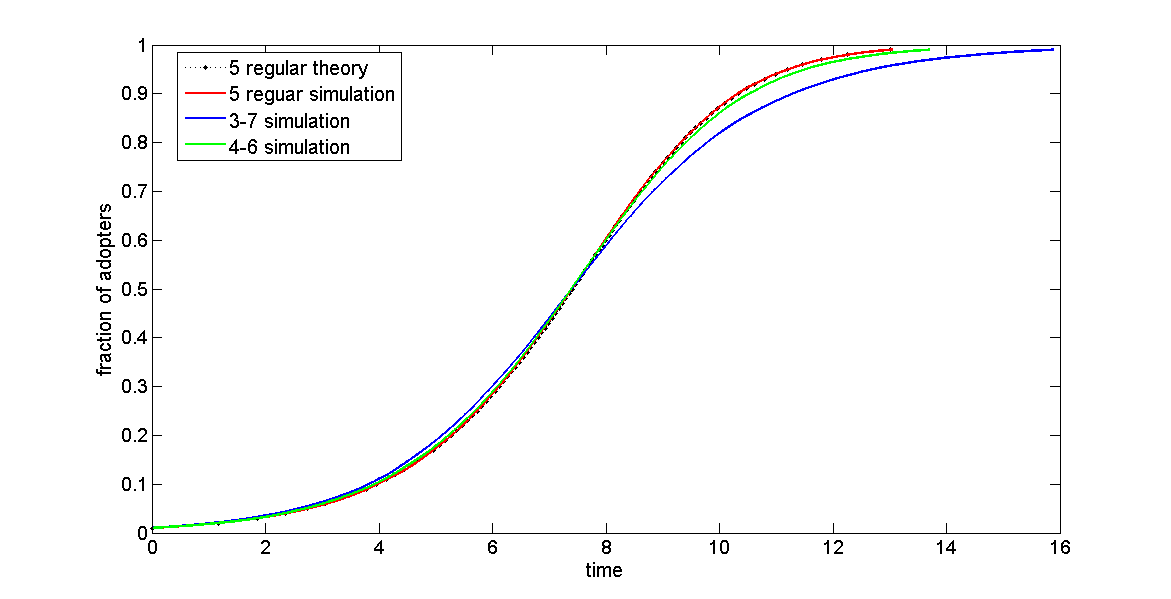}}
\caption{{{Simulated evolution of the fraction of adopters for 5-regular, 4-6 and 3-7 random graphs. Time is normalized such that at time 0 the fraction of adopters is 0.01 for all graphs.}}}
\label{fig:Simulations}
\end{figure}



{
{\bf Adoption process with innovators:} We extend our analysis to include ``innovators'', the  agents who adopt without being contacted by an adopter. In particular, suppose each non-adopter adopts at an independent Poisson process with rate $\beta'$. To include an innovator in the set of adopters, we modify our analysis in the following way. In the exploration process, at iteration $j$, with probability $\frac{\beta' N(j)}{\beta' N(j) + \beta A(j)/k}$, a sleeping node wakes up and all of its $k$ clones become active.  Note this is the probability of a non-adopter becoming an “innovator” before one of the
adopters makes a contact through one of its active half-edges.
If this event does not occur (with probability $1 - \frac{\beta' N(j)}{\beta' N(j) + \beta A(j)/k}$), steps (a) and (b) are performed as described. All of the analysis for the evolution of the exploration process and the coupling exploration and {adoption} process can be modified accordingly.}


{\bf SIR model on random graphs:} As mentioned in the introduction, the {adoption process} studied here is a special case of {adoption} processes used to describe the transmission of communicable diseases by individuals or the spread of computer viruses in a network,  known as the SI (Susceptible-Infected) model. We can use the same ideas to analyze a more general model called SIR that works as follows. The process of infection is the same as that shown in the SI model. Each infected node is removed (they can be interpreted as either dead or immune) at a certain rate. A removed node cannot infect its neighbors. Suppose each infected node is removed at a independent Poisson process with rate $\beta' < \beta$. To extend our framework to this model, in the exploration model, we need to keep track of the nodes that are awake (or, equivalently, are infected) but have been removed as well.

\bibliographystyle{chicago}
\bibliography{references}
\appendix
\section*{APPENDIX}
\setcounter{section}{0}

\section{Missing proofs of Theorem \ref{thm:complete}}
\label{appendixA}

\begin{proof}[Proof of Claim \ref{claim:limit}]
The proof of this claim is mainly algebraic and is given for the sake of completeness.

\begin{align} \label{eq:exp:calculation}
\E{\Delta_n(\gamma n, \alpha n)} & = \sum_{i = \alpha n}^{\gamma n -1} \frac{1}{\lambda_i}
= \frac{n-1}{\beta n}  \sum_{i = \alpha n}^{\gamma n -1} \left( \frac{1}{i} + \frac{1}{n-i}\right) \nonumber\\ & = \frac{n-1}{\beta n} \left[\int_{\alpha}^{\gamma} \frac{1}{x} dx + \int_{\alpha}^{\gamma} \frac{1}{1 -x} dx + \mathcal{E}_n\right] \nonumber \\ &
= \frac{1}{\beta} \left[ \log{\frac{\gamma}{1 - \gamma} - \log{\frac{\alpha}{1 - \alpha}}} \right] + \mathcal{E'}_n,
\end{align}

\noindent{where $\mathcal{E}_n$ and $\mathcal{E'}_n$ represent the error terms. We are left to show that $\mathcal{E'}_n \rightarrow 0$. Note that $\sum_{i = \alpha n}^{\gamma n -1} \frac{1}{i}$ gives an upper-bound for the integral $\int_{\alpha}^{\gamma} \frac{1}{x} dx$. Further,$\sum_{i = \alpha n}^{\gamma n -1} \frac{1}{i+1}$ gives an lower-bound for it. Now the difference between these two sums is $\Theta(1/n)$. Similarly we can show the error contribution of the second sum is also order $\Theta(1/n)$ which completes the proof.}
\end{proof}

\begin{proof}[Proof of Claim \ref{claim:concentration}]
In the following proof, to simplify the notation, we present $\Delta_n(\gamma n, \alpha n)$  by $\Delta_n$. By the Chernoff's bound, for any  $s \geq 0$:
\begin{align*}
\P{{\Delta_n- \E{\Delta_n} \geq \epsilon}}\leq e^{ - s \epsilon} \E{e^{s \left(\Delta_n - \E{\Delta_n}\right)}} = e^{-s \left(\epsilon + \E{\Delta_n} \right)} \E{e^{s \Delta_n}}.
\end{align*}
 Note that $\Delta_n$ is sum of independent exponential random variables, thus we can compute $\E{e^{s \Delta_n }}$ as follows:
\begin{align*}
\E{e^{s \Delta_n }} =  \prod_{i = \alpha n} ^{\gamma n - 1} \left(1 + \frac{s}{\lambda_i - s}\right).
\end{align*}
For any real number $z$, we have $(1 + z) \leq e^{z}$. Therefore,

\begin{align*}
\E{e^{s \Delta_n }} \leq e^{\sum_{i = \alpha n} ^{\gamma n  -1} \frac{s}{\lambda_i - s}}
\end{align*}

Let $\lambda^*$ be $\min\left\{\lambda_{\alpha n}, \lambda_{\gamma n -1}\right\}$, and define $s^* = \frac{\frac{\epsilon}{2 \E{\Delta_n}}}{1 + \frac{\epsilon}{2 \E{\Delta_n}}} \lambda^*$.
Because for any $\alpha n \leq i \leq \gamma n -1$, $\lambda_i \geq \lambda^*$, it follows that:

\begin{align*}
\frac{1}{\lambda_i - s^*} \leq \frac{1}{\lambda_i} \left( 1 + \frac{\epsilon}{2 \E{\Delta_n}}\right).
\end{align*}
Setting $s$ to be $s^*$ in the Chernoff's bound, we have:
\begin{align*}
\P{\Delta_n - \E{\Delta_n} \geq \epsilon } \leq e^{  - \frac{ s^* \epsilon}{2}}
\end{align*}
Similarly, we can show that:
\begin{align*}
\P{- (\Delta_n - \E{\Delta_n}) \geq \epsilon } \leq e^{  - \frac{ s^* \epsilon}{2}} .
\end{align*}

\noindent{Define}

\begin{align}
\label{delta}
\delta = \left(\frac{\frac{\epsilon^2}{2 \E{\Delta_n}}}{1 + \frac{\epsilon}{2 \E{\Delta_n}}} \right)\lambda^*/n .
\end{align}

\noindent{Note that $\lambda^*/n$ is a constant bounded away from zero. More precisely,}

\begin{align*}
\lambda^*/n = \min\left\{ \alpha \frac{n - \alpha n} {n-1},  \frac{(\gamma n  -1) (n - \gamma n +1)} {n (n-1)}\right\}.
\end{align*}

\noindent{Thus the inequality \eqref{eq:claim:hoeffding} holds with the above $\delta$.}
\end{proof}
\section{Proof of Theorem \ref{thm:sublin}}
\label{sec:sublin}
\begin{proof}[Proof of part (a)]
To prove this part, we show that $\frac{\E {T_n(C \log n)}}{\log \log n} \rightarrow \frac{1}{\beta} $ and $Var(T_n(C \log n))$ is bounded. Thus the statement follows by applying  Chebyshev's inequality. Using \eqref{eq:rate:complete}, we have:
\begin{align*}
	\E {T_n(C \log n)} = \frac{1}{\beta}  \sum_{i=1}^{C \log n} \frac{n-1}{i(n-i)}  = \frac{1}{\beta} \log \log n + o(\log \log n).
\end{align*}

The steps of the above calculation are similar to the ones presented in \eqref{eq:exp:calculation}, thus we remove the details. Further, note that $T_n(C \log n)$ is the sum of independent exponential random variables, thus:

\begin{align*}
	Var(T_n(C \log n)) = \sum_{i=1}^{C \log n} \frac{1}{\lambda_i^2} = \frac{1}{\beta^2}  \sum_{i=1}^{C \log n}  \left(\frac{n-1}{i(n-i)}  \right)^2 \leq  \frac{1}{\beta^2}  \sum_{i=1}^{\infty}  \left( \frac{n-1}{i(n-i)} \right)^2 \leq M,
\end{align*}
for some constant $M$. This completes the proof.
\end{proof}

\begin{proof}[Proof of part (b)]

Similar to part (a), we show that $\frac{\E {T_n(C \log n)}}{\log \log n} \rightarrow \frac{k}{\beta(k-2)} $ and $Var(T_n(C \log n))$ is bounded. Following the proof sketch, we use the known fact that, w.h.p., a random $k$-regular graph is locally tree like (for instance see \cite{dembo}). For the sake of completeness, in the following lemma, we state this using the terminology for the exploration process described in Section \ref{sec:random}.

\begin{lemma} \label{lem:tree}
For $j = O(\log n)$, w.h.p., the component of the configuration model revealed until $j$ iterations is a tree. \footnote{Note that the result in the literature is stronger than this statement, and it asserts that even after forming the whole graph, the subgraph formed by these nodes is a tree w.h.p.}
\end{lemma}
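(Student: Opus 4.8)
The plan is to argue that with high probability none of the first $j = O(\log n)$ iterations of the exploration process closes a cycle; since the explored subgraph is connected by construction, being acyclic is equivalent to being a tree. The key observation is that a cycle is created at iteration $t$ exactly when, in step $2(b)$, the active clone $c$ is paired with a clone $c'$ belonging to an already-discovered (active) node rather than to a sleeping one. In the latter case a brand-new node is attached and no cycle appears; in the former case both endpoints of the new edge already lie in the revealed component, so the edge closes a cycle. It therefore suffices to control the total probability of the ``active--active'' pairing event over the first $j$ iterations.

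First I would bound the number of active clones. Since $A(\cdot)$ either increases by $k-2$ when a sleeping node is reached (see \eqref{evolutionstart}) or decreases by $2$ when a cycle edge forms, and $A(0) = k$, we obtain the deterministic bound $A(t) \le k + (k-2)t$ for every $t$. Hence for bounded $k$ and all $t \le j = O(\log n)$ we have $A(t) = O(\log n)$. That this is a sure (worst-case) bound, rather than merely a high-probability one, is what lets the subsequent union bound go through without delicate conditioning, and it is really the only point that requires care.

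Next I would estimate the per-iteration cycle probability. Conditioned on the history $H(t)$, after killing the active clone $c$ there remain $L(t) - 1 = nk - 2t - 1$ unpaired clones, of which exactly $A(t) - 1$ are active; since $c'$ is chosen uniformly among them, the conditional probability of closing a cycle at iteration $t$ is
\begin{align*}
\frac{A(t) - 1}{nk - 2t - 1} \;\le\; \frac{k + (k-2)t}{nk - 2t - 1} \;=\; O\!\left(\frac{\log n}{n}\right),
\end{align*}
uniformly over $t \le j$. A union bound over the $j = O(\log n)$ iterations then shows the probability of ever forming a cycle is $O\!\big((\log n)^2/n\big) \to 0$, so with high probability the revealed component is acyclic, hence a tree. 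This is a pure first-moment argument, and I do not anticipate any substantive obstacle beyond maintaining the deterministic bound on $A(t)$ noted above; the stronger statement in the footnote (acyclicity persisting after the whole graph is formed) would instead require the locally tree-like results of \cite{dembo}.
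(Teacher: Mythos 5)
Your proposal is correct and follows essentially the same route as the paper: a deterministic bound on the number of active clones at iteration $t \le O(\log n)$ (the paper uses $A(j) \le kC\log n$, you use $A(t) \le k+(k-2)t$, both sure bounds), followed by bounding the probability that the uniformly chosen partner clone $c'$ is active rather than sleeping, summed or multiplied over the $O(\log n)$ iterations to get a $1-o(1)$ guarantee. The only cosmetic difference is that the paper writes the estimate as a product lower-bounding the no-cycle probability while you use a union bound on the cycle events; the content is identical.
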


\begin{proof}[Proof of Lemma \ref{lem:tree}]
	For iteration $j \leq C \log n$ in the exploration process, the number of active clones satisfies $A(j) \leq k C \log n$. Thus the probability that there are no cycles in the multigraph formed so far by the configuration model is at least
	$\left(1 - \frac{k C \log n}{k(n - C \log n)} \right)^{C \log n} = 1 -  o(1)$. This implies that w.h.p., the mutli-graph formed so far, is a tree.
\end{proof}
Lemma \ref{lem:tree} implies that in this early adoption regime, at each step of the exploration process, one sleeping node is awakened and using our coupling with the adoption process, this means one extra node
adopts $Z$. Also, at any time step $j \leq C \log n$, the number of active clones is given by $A(j) = k + j(k-2)$. This gives
\begin{align*}
	\E {T_n(C \log n)} = \frac{k}{\beta}  \sum_{i=1}^{C \log n} \frac{1}{k + i (k - 2)}  = \frac{1}{\beta} \frac{k}{k-2} \log \log n + o(\log \log n).
\end{align*}
Similar to part (a), the variance of $T_n(C \log n)$ can be calculated as:
\begin{align*}
	Var(T_n(C \log n)) = \frac{k^2}{\beta^2}  \sum_{i=1}^{C \log n}  \left( \frac{1}{k + i (k - 2)} \right)^2 \leq  \frac{k^2}{\beta^2}  \sum_{i=1}^{\infty}  \left( \frac{1}{k + i (k - 2)} \right)^2 \leq M,
\end{align*}
for some constant $M$. Part (b) of Theorem \ref{thm:sublin} then follows by using the  Chebyshev's inequality.
\end{proof}

\section{Missing proofs of Section \ref{sec:random}}
\label{appendixB}

\begin{proof}[Proof of Lemma \ref{lem:sleepingsolimproved}]
The proof is a simple modification of Wormald's original proof, where the modification is mainly in how we choose the various asymptotic functions involved in the proof.
Let $F(x,y) = \frac{ky}{k - 2x}$ defined for $0 \leq x \leq k/2 - \epsilon_0$ and $0 \leq y \leq 1$. Then
\begin{align*}
	\left|\frac{\partial F}{\partial x}\right| + \left|\frac{\partial F}{\partial y}\right| =  \frac{2ky}{(k - 2x)^2} + \frac{k}{k - 2x}  \leq \frac{k}{2 \epsilon_0^2} + \frac{k}{ {2}\epsilon_0} \triangleq \Phi.
\end{align*}
So, $F$ is Lipschitz with Lipschitz constant $\Phi$. Also note that for any $0<j<n$, we have $F(j/n, N(j)/n) = f'(j/n)$ (defined in \eqref{sleepingde}). Let  $\lambda  = \frac{n}{(\log n)^2} $ and $\delta = \frac{1}{\log n}$. We first show that:

\begin{claim}
\label{cl:supMartingale}
\begin{align} \label{hoeffding2}
	\mathbf{P}(|N(j+\lambda) - N(j) - \lambda F(j/n, N(j)/n)| \geq  4 \delta \lambda) \leq 2 e^{-\frac{\delta^2 \lambda}{2}}.
\end{align}
\end{claim}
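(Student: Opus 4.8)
The plan is to prove this one-window Hoeffding bound by the device underlying Wormald's method: write $N(j+\lambda)-N(j)$ as a telescoping sum, split each increment into its conditional mean (the compensator) and a martingale difference via the Doob decomposition, control the compensator \emph{pathwise} through the Lipschitz property of $F$, and apply the Azuma--Hoeffding inequality to the martingale part. Concretely, I would write
\[
N(j+\lambda)-N(j)=\sum_{i=1}^{\lambda}\bigl(N(j+i)-N(j+i-1)\bigr)=\sum_{i=1}^{\lambda}\mu_{i-1}+W_{\lambda},
\]
where $\mu_{i-1}:=\mathbf{E}\bigl[N(j+i)-N(j+i-1)\mid H(j+i-1)\bigr]$, which by \eqref{eq:sleeping} equals $F$ evaluated at the current state $\bigl((j+i-1)/n,\,N(j+i-1)/n\bigr)$, and $W_{\lambda}:=\sum_{i=1}^{\lambda}\bigl(N(j+i)-N(j+i-1)-\mu_{i-1}\bigr)$ is a martingale with $W_0=0$ for the filtration generated by $H(\cdot)$.

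First I would bound the compensator error deterministically. Within the window both arguments of $F$ move little: $|j'/n-j/n|\le\lambda/n$, and, because $N$ changes by at most one per iteration, $|N(j')/n-N(j)/n|\le (j'-j)/n\le\lambda/n$ for every $j\le j'\le j+\lambda$. With the stated choices $\lambda=n/(\log n)^2$ and $\delta=1/\log n$ we have $\lambda/n=\delta^2$, so the Lipschitz bound with constant $\Phi$ (established just above the claim) gives $\bigl|\mu_{i-1}-F(j/n,N(j)/n)\bigr|\le 2\Phi\delta^2$ for each $i$, whence $\bigl|\sum_{i=1}^{\lambda}\mu_{i-1}-\lambda F(j/n,N(j)/n)\bigr|\le 2\Phi\delta^2\lambda$. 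Since $\delta\to 0$, for all large $n$ we have $2\Phi\delta\le 1$, so this compensator error is at most $\delta\lambda$. This step needs no probability; it holds for every sample path, and it is precisely here that the scale relation $\lambda/n=\delta^2$ makes the drift-approximation error subdominant to the target deviation $\delta\lambda$.

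Next I would handle the martingale term. The one-step increment $N(j+i)-N(j+i-1)$ takes values in $\{-1,0\}$ and its conditional mean $\mu_{i-1}$ lies in the same range, so the martingale differences satisfy $|N(j+i)-N(j+i-1)-\mu_{i-1}|\le 1$. Azuma--Hoeffding then yields $\mathbf{P}\bigl(|W_{\lambda}|\ge t\bigr)\le 2\exp\bigl(-t^2/(2\lambda)\bigr)$ for every $t>0$. By the triangle inequality, once the compensator error is at most $\delta\lambda$ the event $\{|N(j+\lambda)-N(j)-\lambda F(j/n,N(j)/n)|\ge 4\delta\lambda\}$ is contained in $\{|W_{\lambda}|\ge 3\delta\lambda\}$; taking $t=3\delta\lambda$ gives $\mathbf{P}(|W_\lambda|\ge 3\delta\lambda)\le 2\exp(-9\delta^2\lambda/2)\le 2\exp(-\delta^2\lambda/2)$, which is exactly \eqref{hoeffding2}.

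The main obstacle is not any single estimate but the orchestration of scales: one must choose $\lambda,\delta$ so that (i) $\lambda/n\to 0$ keeps the compensator nearly constant across the window, (ii) the allowed deviation $\delta\lambda$ dominates the compensator error $\Theta(\delta^2\lambda)$, and (iii) the resulting exponent $\delta^2\lambda/2=n/(2(\log n)^4)$ is large enough that, after a union bound over the $O(n/\lambda)$ windows (with interpolation between window endpoints using the same Lipschitz control), the failure probability is $o(n^{-3})$, as needed for Lemma \ref{lem:sleepingsolimproved}. A secondary technical point is that every estimate lives in the domain $0\le x\le k/2-\epsilon_0$, where the cutoff $\epsilon_0$ keeps $k-2x$ bounded away from $0$ and hence $\Phi$ finite; this is exactly why the control extends only up to iteration $(k/2-\epsilon_0)n$.
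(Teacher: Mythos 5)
Your proposal is correct and follows essentially the same route as the paper: both control the drift over the window via the Lipschitz constant $\Phi$ of $F$ together with the scale relation $\lambda/n=\delta^2$ (so the drift error is at most $\delta\lambda$), and both then apply a bounded-difference martingale concentration inequality with increments bounded by a constant to obtain the exponent $\delta^2\lambda/2$. The only cosmetic difference is that you use the Doob decomposition plus two-sided Azuma--Hoeffding, whereas the paper folds the drift correction $l\Delta(n)$ into an explicit supermartingale $Z(l)$ and invokes Wormald's supermartingale inequality separately for each tail.
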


\begin{proof}[Proof of Claim \ref{cl:supMartingale}]
For each $0 \leq l \leq \lambda$, we have
\begin{align*}
\mathbf{E}[ N(j+l+1) - N(j + l) | H(j+l)] = F((j+l)/n, N((j+l)/n)) \leq F(j/n, N(j)/n) + \Phi \lambda/n.
\end{align*}
Thus there exists a function $\Delta(n)  = \Phi \lambda/n$ such that conditional on $H(j)$, the sequence
\begin{align*}
			Z(l) = N(j+l) - N(j) - lF(j/n, N(j)/n) - l \Delta(n)
\end{align*}
for $0 \leq l \leq \lambda$ is a supermartingale w.r.t. the sigma fields generated by $H(j), H(j+1), \ldots, H(j+\lambda)$.
Further $Z(l)$, $0 \leq l \leq \lambda$ has bounded increment: $|Z(l+1) - Z(l)| \leq |N(j+l+1) - N(j+l)| + |F(j/n, N(j)/n)| + |\Delta(n)| \leq 3$.
Thus using concentration results for supermartingales from \cite{wormaldDiff} (Lemma 4.2), we have
\begin{align} \label{hoeffding1}
	\mathbf{P}(N(j+\lambda) - N(j) - \lambda F(j/n, N(j)/n) \geq \lambda \Delta(n) + 3 \delta \lambda) \leq e^{-\frac{\delta^2 \lambda}{2}}.
\end{align}
The following two observations complete the proof: (i) $\lambda \Delta(n)\leq \delta \lambda$ (ii) A similar inequality can be obtained for the lower tail of $Z(\lambda)$.

\end{proof}

Next, for $i = 0,1, ..., (k/2 - \epsilon_0)n/\lambda$, by induction we prove that:

\begin{claim}
\label{cl:inductiion}
$\mathbf{P}(|N(i \lambda) - f(i \lambda /n)| \geq \epsilon_i) \leq i e^{-\frac{\delta^2 \lambda}{2}}$, where $\epsilon_i = 5 \delta \lambda \left[ \frac{(1 + \Delta(n))^i - 1}{\Delta(n)} \right]$.
\end{claim}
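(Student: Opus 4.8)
The plan is to prove Claim~\ref{cl:inductiion} by induction on $i$, using the one-block concentration estimate of Claim~\ref{cl:supMartingale} as the engine and the Lipschitz bound $\Phi$ on $F$ to control how the accumulated error is amplified across each block of $\lambda$ iterations. Write $B_i = \{\,|N(i\lambda) - nf(i\lambda/n)| \ge \epsilon_i\,\}$ for the ``bad'' event at the $i$-th block boundary, so the target is $\mathbf{P}(B_i) \le i\,e^{-\delta^2\lambda/2}$. The base case is trivial: the process starts deterministically at $N(0)=n=nf(0)$, so there is nothing to control at $i=0$, and the first genuine bound $\epsilon_1 = 5\delta\lambda$ follows directly from one application of Claim~\ref{cl:supMartingale}. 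The whole content is therefore in the inductive step, where I condition on the good event $B_i^c$ and show that the deviation at iteration $(i+1)\lambda$ can exceed $\epsilon_{i+1}$ only if the fresh martingale fluctuation over the block is atypically large.

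For the inductive step I would decompose
\begin{align*}
N((i+1)\lambda) - nf\!\left(\tfrac{(i+1)\lambda}{n}\right) = \underbrace{\Big[N((i+1)\lambda) - N(i\lambda) - \lambda F\!\left(\tfrac{i\lambda}{n}, \tfrac{N(i\lambda)}{n}\right)\Big]}_{\text{(I) martingale fluctuation}} + \underbrace{\Big[N(i\lambda) - nf\!\left(\tfrac{i\lambda}{n}\right)\Big]}_{\text{(II) inherited error}} + \;\text{(III)},
\end{align*}
where (III) is the deterministic discretization error $\lambda F(\tfrac{i\lambda}{n}, \tfrac{N(i\lambda)}{n}) - n[f(\tfrac{(i+1)\lambda}{n}) - f(\tfrac{i\lambda}{n})]$ between the frozen one-step drift and the exact increment of the ODE solution $f$ from \eqref{sleepingde}. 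Term (I) is at most $4\delta\lambda$ outside an event of probability $2e^{-\delta^2\lambda/2}$ by Claim~\ref{cl:supMartingale}; term (II) is at most $\epsilon_i$ on $B_i^c$. For (III) I would split it into a Lipschitz comparison of $F$ evaluated at $N(i\lambda)/n$ versus $f(i\lambda/n)$, contributing at most $\Phi\lambda\,|N(i\lambda)/n - f(i\lambda/n)| \le \Phi\lambda\epsilon_i/n = \Delta(n)\,\epsilon_i$, plus the Euler error of replacing $n\int f'$ by $\lambda$ times the left-endpoint drift, which by the Lipschitz bound and smoothness of $f$ is $O(\lambda^2/n) = o(\delta\lambda)$ and is absorbed into the additive slack. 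Combining, on $B_i^c$ and off the block's bad event the deviation is at most $(1+\Delta(n))\epsilon_i + 5\delta\lambda$. This is exactly the recursion
\begin{align*}
\epsilon_{i+1} = (1+\Delta(n))\,\epsilon_i + 5\delta\lambda,
\end{align*}
whose solution with $\epsilon_0 = 0$ is the stated closed form $\epsilon_i = 5\delta\lambda\,[(1+\Delta(n))^i - 1]/\Delta(n)$.

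For the probability bound I would use $B_{i+1} \subseteq B_i \cup M_{i+1}$, where $M_{i+1}$ is the event that the block-$(i{+}1)$ fluctuation (I) exceeds $4\delta\lambda$; since Claim~\ref{cl:supMartingale} bounds $\mathbf{P}(M_{i+1})$ by $2e^{-\delta^2\lambda/2}$ uniformly in the conditioning history $H(i\lambda)$, a union bound over blocks gives $\mathbf{P}(B_{i+1}) \le \mathbf{P}(B_i) + O(e^{-\delta^2\lambda/2})$, hence $\mathbf{P}(B_i) = O(i\,e^{-\delta^2\lambda/2})$, matching the claim up to the immaterial leading constant. I would then record the two facts that make this usable for Lemma~\ref{lem:sleepingsolimproved}: with $\lambda = n/(\log n)^2$, $\delta = 1/\log n$ and $\Delta(n)=\Phi\lambda/n$, the product $i\,\Delta(n)$ stays bounded over the whole range $i \le (k/2-\epsilon_0)n/\lambda$, so $(1+\Delta(n))^i = O(1)$ and hence $\epsilon_i = O(\delta\lambda/\Delta(n)) = O(n/\log n) = o(n)$; while $i\,e^{-\delta^2\lambda/2} = O((\log n)^2\,e^{-n/(2(\log n)^4)}) = o(n^{-3})$.

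The main obstacle is the careful bookkeeping in term (III): both the freezing of the drift at the left endpoint of each $\lambda$-block and the within-block variation of $F(x,f(x))$ must fit inside the single additive $\delta\lambda$ of slack (the gap between the $4\delta\lambda$ of Claim~\ref{cl:supMartingale} and the $5\delta\lambda$ of the recursion) and the multiplicative factor $(1+\Delta(n))$. This is precisely where the choices $\lambda = n/(\log n)^2$ and $\delta = 1/\log n$ are forced: they must make $\Phi\lambda/n$ small enough that the amplification $(1+\Delta(n))^i$ remains bounded, yet keep $\delta^2\lambda = n/(\log n)^4 \to \infty$ fast enough that the per-block failure probability beats the required $o(n^{-3})$. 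Balancing these two competing asymptotic demands, rather than any single estimate, is the delicate part; the conditioning subtlety of applying the conditional Claim~\ref{cl:supMartingale} on $B_i^c$ and then removing the conditioning is routine by comparison.
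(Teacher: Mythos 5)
Your proposal follows essentially the same route as the paper's proof: the same three-way decomposition into the block martingale fluctuation (controlled by Claim~\ref{cl:supMartingale}), the inherited error $\epsilon_i$, and the drift/discretization term bounded via the Lipschitz constant $\Phi$ of $F$ plus an $O(\lambda^2/n)=o(\delta\lambda)$ Euler error, leading to the identical recursion $\epsilon_{i+1}=(1+\Delta(n))\epsilon_i+5\delta\lambda$ and a union bound over blocks. The argument is correct (the factor-of-two discrepancy in the per-block failure probability is immaterial, as you note), so there is nothing substantive to add.
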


\begin{proof}[Proof of Claim \ref{cl:inductiion}]
The base case follows from Claim \ref{cl:supMartingale} (inequality (\ref{hoeffding2})), and the fact that $|f(\lambda) - \lambda f'(0)| \leq \delta \lambda$.
Assuming that the claim holds for $1, 2, \ldots, i$, we now prove it for case $i+1$.
We have
\begin{align*}
|N((i+1)\lambda) - n f((i+1)\lambda/n)| \leq |N(i \lambda) - n f(i \lambda/n)| + |N((i+1)\lambda) - N(i\lambda) +  n f(i\lambda/n) -  n f((i+1)\lambda/n)|.
\end{align*}
By induction hypothesis, with probability $1 - i e^{-\frac{\delta^2 \lambda}{2}}$, we have $|N(i \lambda) - n f(i \lambda/n)|  \leq \epsilon_i$. Further, we write:
\begin{align*}
|N((i+1)\lambda) - N(i\lambda) +  n f(i\lambda/n) -  n f((i+1)\lambda/n)| & \leq |N((i+1) \lambda) - N(i \lambda) - \lambda F(i \lambda/n, N(i \lambda)/n)| \\
&+ |  \lambda F(i \lambda/n, N(i \lambda)/n) + nf(i\lambda/n) -   nf((i+1)\lambda/n)|.
\end{align*}
From (\ref{hoeffding2}), we have with probability $1- e^{-\frac{\delta^2 \lambda}{2}}$, $|N((i+1) \lambda) - N(i \lambda) - \lambda F(i \lambda/n, N(i \lambda)/n)| \leq 4 \delta \lambda$. Also,
\begin{align*}
	nf(i\lambda/n) -   nf((i+1) \lambda/n) &=  - \lambda f'(i\lambda/n) + O(\lambda^2/n) \\
	&= - \lambda F(i\lambda/n, f(i\lambda/n)) + O(\lambda^2/n).
\end{align*}	
Hence,
\begin{align*}
	| \lambda F(i \lambda/n, N(i \lambda)/n) + nf(i\lambda/n) -   nf((i+1)\lambda/n)| &\leq | \lambda F(i \lambda/n, N(i \lambda)/n)  -   \lambda F(i\lambda/n, f(i\lambda/n)) | + O(\lambda^2/n)\\
	& \leq \Delta(n) \epsilon_i + O(\lambda^2/n).
\end{align*}	
where in the inequality holds because $F(x,y)$ is Lipschitz. Note that $\lambda^2/n = o(\delta \lambda)$. Putting all these together,
we have with probability at least $1 - (i+1) e^{-\frac{\delta^2 \lambda}{2}}$,
\begin{align*}
|N((i+1) \lambda) - n f((i+1) \lambda/n)| \leq (1 + \Delta(n)) \epsilon_i + 5 \delta \lambda = \epsilon_{i+1}.
\end{align*}
This complete the induction and the proof.
\end{proof}

To complete the proof of the Lemma, first note that $i = O(n/\lambda)$, so $\epsilon_i = O(n \delta) = o(n)$.
Now for any general $j$, find $i^*$ such that $i^* \lambda$ is the nearest integer to $j$ among all $\lambda i$'s, $i = 0,1,\dots, (k/2 - \epsilon_0)n/w$. We have:

\begin{align*}
|N(j) - n f(j/n)| &\leq |N(j) - N(i^* \lambda) | + |N(i^* \lambda) - n f(i^* \lambda/n)| + n |f(i^* \lambda/n) - f(j/n)| \\
			&= |N(i^* \lambda) - n f(i^* \lambda/n)| + O(\lambda).
\end{align*}
Noting that $\frac{n}{\lambda} e^{-\frac{\delta^2 \lambda}{2}} = o(n^{-3})$ completes the proof.
\end{proof}




\begin{proof}[Proof of Lemma \ref{lem:timebound}]
		First note that for any $0 \leq x \leq \frac{k - \epsilon_0}{2}$, we have $f'(x) \leq - |f'\left( \frac{k - \epsilon_0}{2} \right)| = -c$. Using the mean value theorem we have: $nf(j_{\alpha}+ \frac{2}{c}|\delta(n)|/n) \leq nf(j_{\alpha}) + 2 \delta(n)$. From Lemma \ref{lem:sleepingsolimproved}, we have $N(n j_{\alpha} + \frac{2}{c}|\delta(n)|)  \leq nf(j_{\alpha} + \frac{2}{c}|\delta(n)|/n) + \delta(n) \leq nf(t_{\alpha}) - |\delta(n)| = n (1 - \alpha) -  |\delta(n)|$.
	    By definition, $N(T_{\alpha}) = n(1 - \alpha)$, and $N(\cdot)$ is decreasing in the number if iterations. Thus,  $J_{\alpha} \leq n j_{\alpha} + \frac{2}{c} |\delta(n)|$. Using a similar argument, we can prove $J_{\alpha} \geq n j_{\alpha} - \frac{2}{c} |\delta(n)|$ and the proof is complete.

\end{proof}

\begin{proof}[Proof of Lemma \ref{lem:connectivity}]
Let $E$ be the event that the configuration model is not connected and
let $E_s$ be the event that there exists a subgraph $\mathcal{H}$ of size $s$ which is isolated in the configuration model.
For any positive even integer $m$, the number of possible pairings of $m$ clones is
$\frac{{m!}}{(m/2)! 2^{(m/2)}}$.
Using this fact we can bound the probability of $E_s$ by simply counting the number of
 pairings as follows.
\begin{align*}	
\mathbf{P}(E_s) \leq  { n \choose s} \frac{ {nk/2 \choose sk/2} }{{nk \choose sk}}.
\end{align*}
The above bound is asymptotically largest when $s$ is a constant with respect to $n$. See \cite{Wormald} for similar proof techniques. In this case for any $\epsilon_1 > 0$, we can obtain:
\begin{align} \label{simple}
\mathbf{P}(E)  = O(n^{-s(k/2 - 1) + \epsilon_1}).
\end{align}
Since $k \geq 3$, when $\mathcal{G}$ is simple, it cannot have an isolated component of size less than $4$ nodes. The worst case in (\ref{simple}) is when $k = 3$ and $s = 4$ which gives the required bound.
\endproof
\end{proof}

\end{document}